\documentclass[review]{elsarticle}

\usepackage{graphicx}
\usepackage{epstopdf}
\usepackage{amsmath}
\usepackage{amssymb}
\usepackage[linewidth=.8pt]{mdframed}
\usepackage{tcolorbox}
\usepackage{bm}
\usepackage{rotating}
\usepackage{enumerate}
\usepackage{nicefrac}
\usepackage{booktabs}
\usepackage[tableposition=top]{caption}
\usepackage{tikz}
\usepackage[utf8]{inputenc}
\usepackage[english]{babel}
\usepackage{pgfplotstable}
\usepackage{array}
\usepackage{colortbl}
\usepackage{booktabs}
\usepackage{eurosym}
\usepackage{amsmath}
\usepackage{pgfplotstable}
\usepackage{xcolor}
\usepackage{pgfplots}
\pgfplotsset{compat=newest}
\usepgfplotslibrary{groupplots}

\usepackage{array}
\usepackage{multirow}
\usepackage{threeparttable}
\usepackage{makecell}
\usepackage[procnumbered,ruled,vlined,linesnumbered]{algorithm2e}
\usepackage{siunitx}
\usepackage{stfloats}
\usepackage{graphicx}
\usepackage{subfigure}
\usepackage{hyperref}
\usepackage{textcomp}
\usepackage{array}
\usepackage{epstopdf}
\usepackage{framed}
\usepackage{cleveref}
\usepackage[normalem]{ulem}
\usepackage{enumitem}
\usepackage{xcolor}
\usepackage{amsfonts}
\usepackage{booktabs}
\usepackage{siunitx}

\newcommand\hl{\bgroup\markoverwith
  {\textcolor{yellow}{\rule[-.5ex]{2pt}{2.5ex}}}\ULon}

\newtheorem{problem}{Problem}

\newtheorem{theorem}{Theorem}[section]

\newtheorem{lemma}[theorem]{Lemma}

\newtheorem{remark}[theorem]{Remark}

\newcommand{\PreserveBackslash}[1]{\let\temp=\\#1\let\\=\temp}
\newcolumntype{C}[1]{>{\PreserveBackslash\centering}p{#1}}
\newcolumntype{R}[1]{>{\PreserveBackslash\raggedleft}p{#1}}
\newcolumntype{L}[1]{>{\PreserveBackslash\raggedright}p{#1}}

\newenvironment{fminipage}%
{\begin{Sbox}\begin{minipage}}%
		{\end{minipage}\end{Sbox}\fbox{\TheSbox}}

\def\eps{\epsilon}

\def\calG{\mathcal{G}}

\def\calN{\mathcal{N}}

\newcommand{\A}{\boldsymbol{\mathit{A}}}

\newfont{\nset}{msbm10}

\newcommand{\removelatexerror}{\let\@latex@error\@gobble}

\newcommand\LL{\bm{\mathit{L}}}

\newdefinition{rmk}{Remark}
\newproof{pf}{Proof}
\newproof{pot}{Proof of Theorem \ref{thm2}}

\newcommand{\one}{\mathbf{1}}

\newcommand\xx{\boldsymbol{\mathit{x}}}
\newcommand\aaa{\boldsymbol{\mathit{a}}}

\newcommand\bb{\boldsymbol{\mathit{b}}}

\newcommand\dd{\boldsymbol{\mathit{d}}}
\newcommand\ee{\boldsymbol{\mathit{e}}}

\newcommand\uu{\boldsymbol{\mathit{u}}}

\newcommand\DD{\boldsymbol{\mathit{D}}}

\newcommand\PP{\boldsymbol{\mathit{P}}}
\newcommand\MM{\boldsymbol{\mathit{M}}}

\newcommand\II{\boldsymbol{\mathit{I}}}
\newcommand\vvv{\boldsymbol{\mathit{v}}}

\def\calF{\mathcal{F}}
\def\calG{\mathcal{G}}

\def\calN{\mathcal{N}}
\def\calH{\mathcal{H}}

\DontPrintSemicolon
\SetKw{KwAnd}{and}
\SetFuncSty{textsc}
\SetKwInOut{Input}{Input\ \ \ \ }
\SetKwInOut{Output}{Output}

\newcommand{\pair}{\textsc{Pair}}
\newcommand{\ppair}{\textsc{PPair}}
\newcommand\bOmega{\boldsymbol{\Omega}}

\newcommand{\mPi}{\boldsymbol{\mathbf{\Pi}}}
\newcommand{\mpi}{\boldsymbol{\mathbf{\pi}}}

\DeclareMathOperator*{\argmax}{arg\,max}

\SetKw{KwAnd}{and}
\SetKw{Return}{Return}

\DontPrintSemicolon
\SetKw{KwAnd}{and}
\SetFuncSty{textsc}
\SetKwInOut{Input}{Input\ \ \ \ }
\SetKwInOut{Output}{Output}

\usepackage{tabularx}
\usepackage{stfloats}

\journal{Theoretical Computer Science}

\begin{document}
\begin{frontmatter}
\title{Efficient Edge Rewiring Strategies for Enhancing PageRank
Fairness}
\author[1]{Changan~Liu}
\ead{19110240031@fudan.edu.cn}
\author[1]{Haoxin~Sun}
\ead{21210240097@m.fudan.edu.cn}
\author[2]{Ahad N. Zehmakan}
\ead{ahadn.zehmakan@anu.edu.au}
\author[1]{Zhongzhi~Zhang\corref{cor1}}
\ead{zhangzz@fudan.edu.cn}
\cortext[cor1]{Corresponding author}
\affiliation[1]{organization={Shanghai Key Laboratory of Intelligent Information Processing, College of Computer Science and Artificial Intelligence, Fudan University},
postcode={200433},
city={Shanghai},
country={China}}
\affiliation[2]{organization={School of Computing, the Australian National University},
city={Canberra},
country={Australia}}

\begin{abstract}
We study the notion of unfairness in social networks, where a group such as females in a male-dominated industry are disadvantaged in access to important information, e.g. job posts, due to their less favorable positions in the network. We investigate a well-established network-based formulation of fairness called PageRank fairness, which refers to a fair allocation of the PageRank weights among distinct groups. Our goal is to enhance the PageRank fairness by modifying the underlying network structure. More precisely, we study the problem of maximizing PageRank fairness with respect to a disadvantaged group, when we are permitted to rewire a fixed number of edges in the network. Building on a greedy approach, we leverage techniques from fast sampling of rooted spanning forests to devise an effective linear-time algorithm for this problem. To evaluate the accuracy and performance of our proposed algorithm, we conduct a large set of experiments on various real-world network data. Our experiments demonstrate that the proposed algorithm significantly outperforms the existing ones. Our algorithm is capable of generating accurate solutions for networks of million nodes in just a few minutes.
\end{abstract}

\begin{keyword}
Social network \sep PageRank \sep algorithmic fairness \sep edge rewiring

\end{keyword}
\end{frontmatter}

\section{Introduction}

Online social networks have revolutionized information spreading, offering unprecedented speed and reach, but also are introducing challenges like misinformation spreading and formation of filter bubbles. One subtle challenge is the unfairness in information access imposed on certain groups, cf.~\cite{d2021mitigating,becker2021influence}. It has been consistently observed~\cite{styczen2022targeted,cong2023fairsample,tkdeIMfair23,zhang2020fairness,recfairwww23,fish2019gaps,teng2021influencing,tsang2019group,yaseen2016influence,speicher2018potential,jalali2020information, stoica2020seeding,LiXiAh25} that disadvantaged groups, e.g., based on gender, age, or religion, have suffered from lack of exposition to important information such as job posts, scientific findings, and financial information. Furthermore, this unfairness can fuel a reinforcing cycle where advantaged groups have better opportunities for further improvement, perpetuating inequality~\cite{bashardoust2022reducing,jalali2020information}. 

Several mathematical notions have been introduced, mostly with respect to the relative importance of nodes in the underlying network, to formulate the concept of fairness~\cite{dwork2012fairness, Feldman2014CertifyingAR,jalali2020information}. One which has found significant popularity recently is the PageRank fairness~\cite{tsioutsiouliklis2021fairness,tsioutsiouliklis2022link,prfairwww24}. PageRank (PR) assigns a score to each node $u$ that signifies the importance of $u$ in the network globally, whereas Personalized PageRank (PPR) rooted at a specific node $v$ assigns a score to each node $u$ that captures the relative importance of $u$ for $v$~\cite{Anatomy1998,PRbeyond,PRwww23,Chung2010PageRankAR}.

We focus on group-based fairness, where nodes belong to groups based on the value of some protected attribute. For example, in a social or cooperation network where nodes correspond to individuals, the protected attribute may be age, gender, or religion. Previous research has shown that under certain conditions, the results of both PR and PPR may be unfair in terms of the PageRank weights assigned to each group~\cite{EspinNoboa2021InequalityAI,tsioutsiouliklis2021fairness,prfairwww24}. This is crucial since these measures not only have strong correlation to information access level for the nodes, but also are commonly used by many network algorithms~\cite{jeh2003scaling,Yin2019ScalableGE,Chen2020ScalableGN}, which could cause further discrimination against such disadvantaged groups.

\begin{figure}[t!]
    \centering
    \includegraphics[width=0.6\columnwidth]{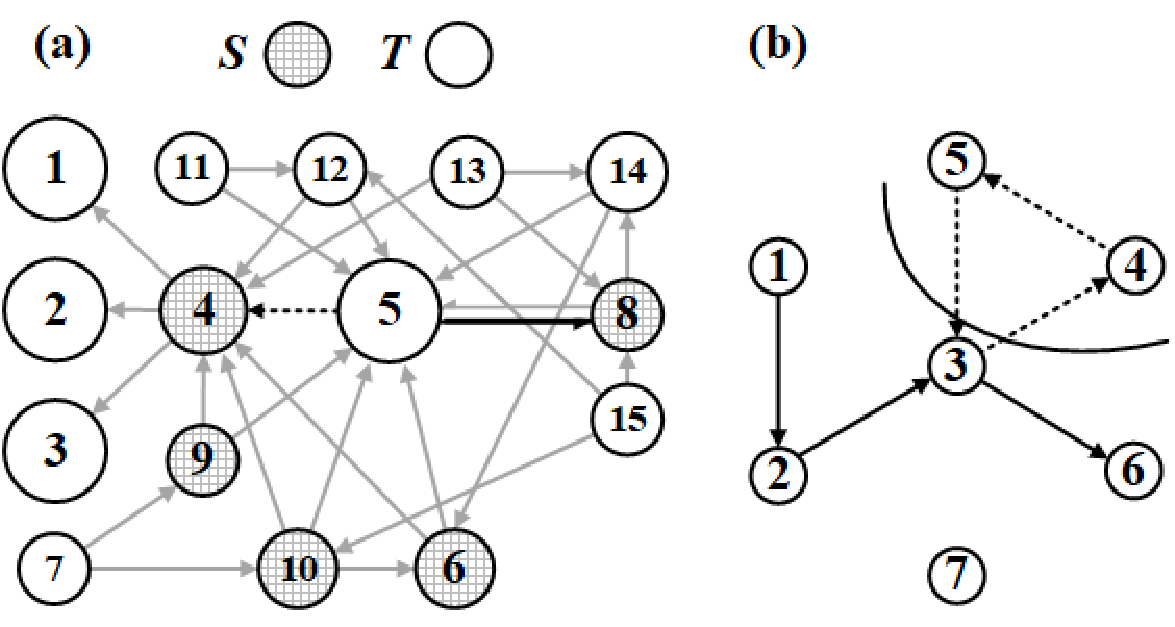}
    \caption{(a) Illustration of our proposed algorithm \textsc{Exact} (Algorithm 1) which rewires edge $(5,4)$ to $(5,8)$ and the PageRank allocation to the disadvantaged group $S$ changes from $\mpi(S)=0.13$ to $\mpi(S)=0.17$. (The sizes of nodes are determined by their PageRank scores.) (b) Illustration of the loop erase process of Wilson algorithm~\cite{Wilson1996GeneratingRS}, which bases our fast algorithm. The loop $(3,4,5,3)$ is erased.}
    \label{fig:illu}
\end{figure}

There has been a growing demand for a deeper understanding of how the unfairness in social networks can be mitigated by modifying the graph structure. In particular, various graph operations such as edge addition, deletion and rewiring (motivated by real-world instances such as link recommendation systems) have been explored~\cite{fairdrop21,li2021on,burst_2020,jalali2020information,swift2022maximizing,bashardoust2022reducing,tsioutsiouliklis2022link,LiZhAh24}. We focus on the popular edge rewiring operation that replaces an edge $(i,j)$ with a missing edge $(i,k)$ (see Figure~\ref{fig:illu} (a) for illustration.), which has been practically widely adopted across domains such as opinion dynamics~\cite{chitra2020analyzing,bhalla2023local,santos2021link}, recommendation systems~\cite{Fabbri2022RewiringWR,Coupette2023ReducingET}, graph learning~\cite{ma2021graph,diffwire22} and others~\cite{valente2012network,d2019rewiring,kim2012network,han2019measuring,Chan2016OptimizingNR}. One desired characteristic of the rewiring operation is that it preserves some essential graph properties, namely number of edges and average degree, in contrast to the edge insertion/deletion~\cite{ma2021graph}. Furthermore, the connectivity of the graph, formulated in a Laplacian based algebraic notion, alters less aggressively~\cite{ma2021graph}.

Given a graph $\calG=(V,E)$, a disadvantaged group $S\subset V$ and a budget $b$, we are interested in the problem of rewiring $b$ edges to maximize the total PR (analogously PPR) for nodes in $S$ (which results in minimizing PR (resp. PPR) unfairness towards set $S$). It is worth emphasizing that the aim here is to make the network itself fairer, which is different from in-processing approaches where the input of the algorithm is augmented for the duration of the algorithm~\cite{rastegarpanah2019fighting,wang2019enhancing} to reach temporary fairness. Our work can be seen as a continuation of works in~\cite{tsioutsiouliklis2021fairness,tsioutsiouliklis2022link}, which focused on adding/removing edges adjacent to a fixed node. (Please see Section~\ref{sec:related-work} for more details.)

In summary, we make the following contributions.
\begin{itemize}[leftmargin=*]
    \item We formally introduce the optimization problems to choose a set of edge rewirings that maximize the PR and PPR fairness.
    \item We provide analytical formulas for the effect of a single edge rewiring on PR and PPR fairness. Based on the formulas, we propose cubic time algorithms that exploit greedy strategies to solve the studied problems.
    \item To speed up the computation, we then first present a novel forest interpretation for the PageRank algorithm and subsequently propose efficient linear-time edge rewiring algorithms for maximizing PR and PPR fairness that exploits fast sampling of spanning forests.
    \item We report experimental results on real-world graphs which confirm the superiority of the proposed algorithms to the existing algorithms. For example, our linear-time algorithm could handle the LinkedIn network, which has more than 3 million nodes in just a few minutes.
\end{itemize}

\textbf{Roadmap.} 
We first introduce some necessary preliminaries related to our work in Section~\ref{sec:pre}. Then, in Section~\ref{sec:problem} we define the PageRank fairness and formulate the problems we aim to solve. In Section~\ref{sec:exact}, we present the first algorithm \textsc{Exact} that greedily selects the edges to be rewired, followed by Section~\ref{sec:fast} where we provide the algorithm \textsc{Fast} to speed up the greedy process. In Section~\ref{sec:exp}, we report our experimental findings. In Section~\ref{sec:related-work}, we discuss the related work. Finally, in Section~\ref{sec:conclusion}, we conclude this paper and discuss some future directions.

\section{Preliminary}\label{sec:pre}
This section sets up the stage for our study by introducing basic notations, the spanning (rooted) forest, forest matrix, and the PageRank algorithm.

\subsection{Notations}
In this paper, unless otherwise specified, we use boldface lowercase $\xx$ and uppercase $\MM$ letters to denote vectors and matrices, respectively. Individual elements are represented as $\xx_i$ for vectors (or $\MM_{ij}$ for matrices). We use $\MM_{i,:}$ and $\MM_{:,j}$ to refer to matrix rows and columns, respectively. $\ee_i$ represents the $i$-th standard basis vector with 1 at the $i$-th position and 0 elsewhere. $\one$ signifies an all-ones vector. The transpose of $\xx$ is denoted as $\xx^{\top}$, and $\one_S$ denotes a column vector with 1s at elements in set $S$ and 0s elsewhere.

Consider a directed weighted graph (digraph) $\calG = (V, E, w)$ where $V$ is the set of nodes, $E \subseteq V \times V$ is the set of directed edges and $w:E\rightarrow \mathbb{R}^{+}$ is the edge weight function. Let $n := |V|$ and $m := |E|$ denote the number of nodes and the number of edges, respectively. We use $\mathcal{N}\left(i\right)$ to denote the set of out-neighbors of $i$. The Laplacian matrix of $\calG$ is $\LL = \DD - \A$, where $\A = (a_{ij})_{n\times n}$ is the adjacency matrix whose entry $\A_{ij}=w(i,j)$ if $(i,j)\in E$, and $\A_{ij}=0$ otherwise, and $\DD$ is the out-degree diagonal matrix $\DD=\text{diag}(\dd_1,\cdots,\dd_n)$ where $\dd_i=\sum_j\A_{ij}$ is the out-degree of node $i$. We denote the maximum out-degree in $\calG$ as $d_{max}=\max\{\dd_i\mid i\in V\}$. Let $\PP=\DD^{-1} \A$ be the random walk matrix (i.e., transition matrix) of $\mathcal{G}$, in which $p_{i j}=\frac{w(i,j)}{\dd_{i}}$ if $e_{ij} \in E$ and $p_{i j}=0$ otherwise. We summarize all the frequently used notations in Table~\ref{tab:notations}.

{
\begin{table}
\centering
\caption{Frequently used notations.}
\begin{tabular}{|m{2.4cm}|m{8cm}|}
\hline
\textbf{Notation} & \textbf{Description} \\
\hline
$\calG=(V, E,w)$ & A directed graph $\calG$ with node set $V$, edge set $E$, and weight function $w$. \\
\hline
$n, m$ & The numbers of vertices and edges of $\calG$ respectively.\\
\hline
$\A, \DD, \PP, \LL$ & The adjacency, degree, transition and laplacian matrices of $\calG$. \\
\hline
$\bOmega, \omega_{i,j}$ & The forest matrix and its element at $i$-th row and $j$-th column. \\
\hline
$\mathcal{H}, \varepsilon(\mathcal{H})$ & A subgraph of $\calG$ with its weight being $\varepsilon(\mathcal{H})$. \\
\hline
$\calF, \calF_{ij}$ & The set of all spanning rooted forests, the subset of $\calF$ where node $j$ is rooted at node $i$ respectively. \\
\hline
$F, \rho(F)$ & An instance of $\calF$ and its root set.\\
\hline
$ \Delta((i,j,k), S)$ & The increase of PageRank (PR) allocation for group $S$ after the execution of the rewiring $(i,j,k)$.\\
\hline
$\Delta_v((i,j,k), S)$ & The increase of the allocation of Personalized PageRank (PPR) for group $S$ after the execution of the rewiring $(i,j,k)$.\\
\hline
\end{tabular}
\label{tab:notations}
\end{table}
}

\subsection{Spanning Forests and Forest Matrix}\label{sec:2.2}
For a digraph $\calG=(V, E, w)$, a spanning subgraph $\calH$ is a graph who has the same node set as $\calG$ and its edge set is a subset of $E$. A tree is a subgraph that has no cycles in it. A spanning forest on $\mathcal{G}$ is a spanning subgraph of $\mathcal{G}$ that is a forest, i.e., a set of disjoint trees. A spanning rooted tree is a spanning tree that has a node as the root. A spanning rooted forest of $\mathcal{G}$ is a spanning forest of $\mathcal{G}$, where all the trees are rooted. See Figure~\ref{fig:tree} as an illustration. Let $\calF$ denote the set of all spanning rooted forests of digraph $\mathcal{G}$ and $\calF_{i j}$ denote the set of those spanning forests of $\mathcal{G}$ that nodes $i$ and $j$ in the same tree rooted at node $i$. For a forest $F\in \calF$, we denote the set of its roots as $\rho(F)$, the set of nodes in it that rooted at node $i$ as $M(F, i)$, and the root of node $i$ in it as $r(F, i)$. We denote the weight of a forest $F$ as $\varepsilon(F)$ where $\varepsilon(F)$ is the product of the weights of all edges in $F$. We set the weight of a forest with no edges as 1. For any nonempty set $H$ of subgraphs, we define its weight as $\varepsilon(H)=\sum_{\mathcal{H} \in H} \varepsilon(\mathcal{H})$. If $H$ is empty, we set its weight to be zero~\cite{ChSh97,ChSh98}. 

Let $\II$ be the identity matrix, then the forest matrix~\cite{Chebotarev2006SpanningFA, Golender1981GraphPM} $\bOmega=\bOmega(\mathcal{G})$ of graph $\mathcal{G}$ is defined as $\bOmega=(\LL + \II)^{-1}=\left(\omega_{i j}\right)_{n \times n}$, where the entry $\omega_{i j}=\varepsilon\left(\calF_{j i}\right) / \varepsilon(\calF)$~\cite{ChSh97,ChSh98, Ch82}. For any pair of nodes $i$ and $j$ in graph $\mathcal{G}, 0 \leq \omega_{i j} \leq 1$~\cite{Merris1998DoublySG}. The forest matrix $\bOmega$ has various practical applications~\cite{Fouss2007RandomWalkCO,Jin2019ForestDC,Senelle2013TheSD, sun2023opinion}. For instance, its entry $\omega_{i j}$ can be used to gauge the proximity between nodes $i$ and $j$: the less the value of $\omega_{i j}$, the ``further'' $i$ is from $j$~\cite{ChSh97}. 

\begin{figure}[t!]
    \centering
    \includegraphics[width=0.8\columnwidth]{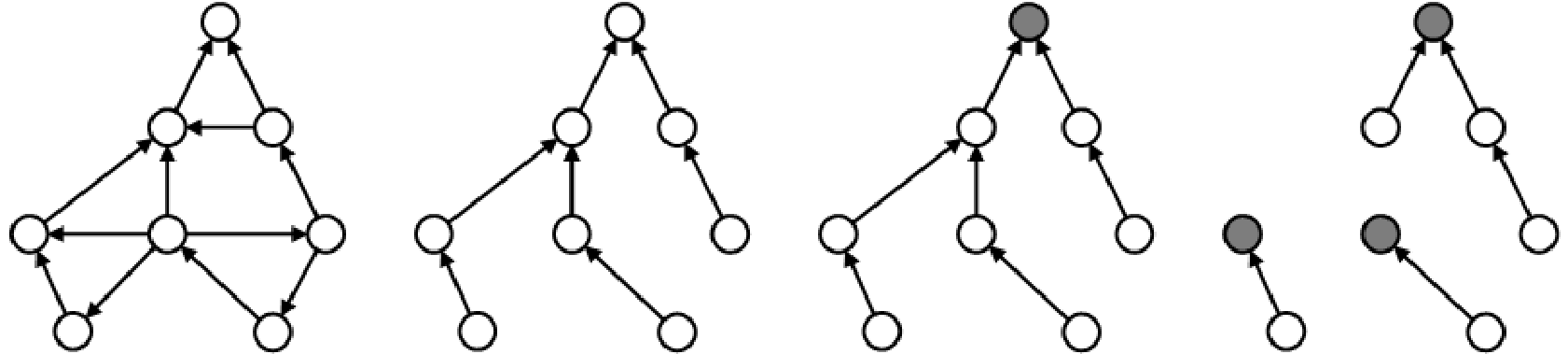}
    \caption{From left to right: a digraph, a spanning tree, a spanning rooted tree and a spanning rooted forest on it (roots are in gray).}
    \label{fig:tree}
\end{figure}

\subsection{The PageRank Algorithm}\label{sec:2.3}

The illustrious PageRank (PR) algorithm, initially unveiled in~\cite{Anatomy1998}, serves as a crucial component in link analysis and has found a broad-spectrum applicability in diverse domains, such as web page search, recommendation, disinformation detection, and social network analysis~\cite{PRbeyond,fu2022disco,fu2021sdg,kamvar2003extrapolation,Tong2006FastRW,yao2012anomaly,yoon2019fast}.

The standard expression of PageRank is expressed as follows
\begin{equation}\label{eq:preq1}
    \mpi^\top=(1-\alpha) \mpi^\top \PP+\alpha \vvv^\top,
\end{equation}
where $\mpi$ is a scoring vector for each node $v \in V$ which epitomizes the stationary distribution of a random walk with restart on graph $\calG$, $\PP$ is the transition matrix of this random walk, $\alpha$ is the parameter of restart probability, and $\vvv$ is the jump vector (also referred to as teleport vector or personalized vector~\cite{Chung2010PageRankAR, PRbeyond}), which determines a distribution over the nodes in the graph for restart node selection. Commonly, $\alpha=0.15$, and the jump vector is the uniform vector $\uu$~\cite{PRbeyond}. We denote the PageRank score of node $i$ as $\mpi(i)$.

Setting the restart vector as a unit vector $\ee_i$ yields an important extension of the PageRank algorithm: the Personalized PageRank (PPR) algorithm. The PPR vector for node $i$ is denoted as $\mpi_i$, signifying that node $i$ bequeaths PageRank $\mpi_i(u)$ to node $u$.

The following lemma will facilitate our analysis.
\begin{lemma}\label{lem:PII}
    For the PageRank vector $\mpi$, the following relationship holds: $\mpi^{\top}=\vvv^{\top} \mPi$, where: (\romannumeral1) $\mPi=\alpha(\II-(1-\alpha) \PP)^{-1}$; and (\romannumeral2) The $i$-th row vector $\mPi_{i,:}$ aligns with the personalized PageRank vector of node $i$: $\mpi_i=\mPi_{i,:}$.
\end{lemma}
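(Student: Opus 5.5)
The plan is to solve the defining equation~\eqref{eq:preq1} directly for $\mpi^\top$, after first checking that the matrix one must invert is actually nonsingular. Rearranging~\eqref{eq:preq1} gives
\begin{equation*}
\mpi^\top\bigl(\II-(1-\alpha)\PP\bigr)=\alpha\vvv^\top .
\end{equation*}

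\textbf{Invertibility.} Since $\PP=\DD^{-1}\A$ is row-stochastic (we implicitly assume every node has positive out-degree, as needed for $\PP$ to be defined), $\|\PP\|_\infty=1$. For $\alpha\in(0,1]$ this gives $\|(1-\alpha)\PP\|_\infty=1-\alpha<1$, so the spectral radius of $(1-\alpha)\PP$ is below one. Hence $\II-(1-\alpha)\PP$ is invertible, with the convergent Neumann series
\begin{equation*}
\bigl(\II-(1-\alpha)\PP\bigr)^{-1}=\sum_{k\ge 0}(1-\alpha)^k\PP^k .
\end{equation*}
This step also shows that the stationary vector $\mpi$ in~\eqref{eq:preq1} is unique.

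\textbf{Part (i).} Multiplying the rearranged equation on the right by this inverse yields $\mpi^\top=\alpha\vvv^\top(\II-(1-\alpha)\PP)^{-1}=\vvv^\top\mPi$, with $\mPi$ as stated. As a sanity check, $\mPi$ has nonnegative entries and $\mPi\one=\alpha\sum_k(1-\alpha)^k\one=\one$. So $\mPi$ is row-stochastic, and $\mpi$ is a probability distribution whenever $\vvv$ is.

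\textbf{Part (ii).} By definition, the PPR vector $\mpi_i$ is the solution of~\eqref{eq:preq1} with $\vvv=\ee_i$. By part (i) and uniqueness, $\mpi_i^\top=\ee_i^\top\mPi=\mPi_{i,:}$, which is the claim.

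\textbf{Main obstacle.} The only nontrivial point is the invertibility and uniqueness argument, which needs $\alpha>0$ and $\PP$ row-stochastic. If some node has zero out-degree, one would instead adopt the usual dangling-node convention, for example a self-loop or a uniform row, so that $\PP$ stays stochastic. Without such a convention the bound $\|(1-\alpha)\PP\|_\infty\le 1-\alpha$ still holds, so invertibility survives, but $\mPi$ would no longer be row-stochastic.
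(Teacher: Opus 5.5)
Your proof is correct. The paper states this lemma as a standard preliminary and gives no proof. Your argument (rearranging Eq.~\eqref{eq:preq1}, inverting $\II-(1-\alpha)\PP$ via the Neumann series, which also gives uniqueness and row-stochasticity of $\mPi$, then specializing to $\vvv=\ee_i$) is the standard derivation the paper implicitly relies on. It also matches the paper's later rewriting in Eq.~\eqref{eq:22}, since $\bm{\Lambda}^{-1}\LL+\II=\frac{1}{\alpha}\bigl(\II-(1-\alpha)\PP\bigr)$.
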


\section{Problem Definition}\label{sec:problem}
\textbf{PageRank Fairness.}
Consider a node set $S$, which corresponds to a group of disadvantaged individuals. Let $r(S)=\frac{|S|}{|V|}$ denote the ratio of group $S$ in the overall population. We define $\mpi(S)$ to denote the PageRank mass allocated to $S$, i.e., $\mpi(S)=\sum_{i \in S} \mpi(i)$. We also refer to $\mpi(S)$ as the PR ratio of $S$.

For a group $S$ and a parameter $\phi$, we say that the network is $PR$-\textit{unfair} to group $S$ if $\mpi(S)<\phi$. While the parameter $\phi$ can be chosen to implement various notions of fairness, we set its default value to $\phi=r(S)$. In other words, the fraction of the PageRank weights allocated to $S$ is at least proportional to its size.

Analogously, for a node $v$, $\mpi_v(S)$ is used to denote the personalized PageRank mass assigned to group $S$ by node $v$, i.e., $\mpi_v(S)=$ $\sum_{i \in S} \mpi_v(i)$. As in~\cite{tsioutsiouliklis2021fairness}, for defining PPR-unfairness, the probability mass $\alpha$ allocated to $v$ is excluded through the random jump to consider only the ratio of the organic PPR that is assigned to $S$. More accurately, we define $\overline{\mpi_v}(S)=\frac{\mpi_v(S)-\alpha \one_{v \in S}}{1-\alpha}$, where $\one_{v \in S}$ is an indicator function which is 1 if and only if $v$ is in $S$. For a given parameter $\phi$, a node $v$ is said to be $PPR$-\textit{unfair} to $S$ if $\overline{\mpi_v}(S)<\phi$.

\textbf{Problem Statement.} We study the following optimization problems where the goal is to maximize fairness using edge rewiring denoted as $(i, j, k)$, in which we replace an edge $(i, j) \in E$ by an edge $(i, k) \notin E$ with $i \neq k$.

\begin{tcolorbox}[colback=white,colframe=black]
\begin{problem}\label{pro:1}
    (\uline{\textbf{P}}R F\uline{\textbf{a}}irness Max\uline{\textbf{i}}mization by Rewi\uline{\textbf{r}}ing, \pair) Given a digraph $\calG = (V,E,w)$, a target set $S\subset V$ and a budget $b$, we aim to find a collection $R$ of $b$ edge rewirings such that the PR allocated to set $S$, $\mpi(S)$ is maximized. 
\end{problem}
\end{tcolorbox}
\begin{tcolorbox}[colback=white,colframe=black]
\begin{problem}\label{pro:2}
    (\uline{\textbf{PP}}R F\uline{\textbf{a}}irness Max\uline{\textbf{i}}mization by Rewi\uline{\textbf{r}}ing, \ppair) Given a digraph $\calG = (V,E,w)$, a source node $v\in V$, a target set $S\subset V$, and a budget $b$, we aim to find a collection $R$ of $b$ edge rewirings  such that the PPR allocated to set $S$, $\overline{\mpi_v}(S)$ is maximized. 
\end{problem}
\end{tcolorbox}

Note that our goal is to assist a disadvantaged group $S$ for which $\mpi(S)< \phi=r(S)$ by increasing $\mpi(S)$. This should explain our choice of the objective function in Problem 1. (A similar argument applies to Problem 2.)

\section{Exact Greedy Algorithm}\label{sec:exact}
Obviously, the proposed problems, namely \textsc{Pair} and \textsc{PPair} are inherently combinatorial, imposing many challenges in solving them. Fortunately,  a greedy approach often yields commendable results for these kinds of problems. In this paper, we also resort to the greedy strategy. We first analyze the mechanics of individual rewiring to understand how we can identify and perform greedily optimal rewirings efficiently. As each greedy rewiring $(i,j,k)$ constitutes a rank-one update of the transition matrix $\PP$, we can express the updated transition matrix $\PP^{\prime}$ as
$$
\PP^{\prime}=\PP+\aaa(-\bb)^{\top},
$$
where $\aaa=p_{i j} \ee_i$ and $\bb=\ee_j-\ee_k$. This corresponds to a rank-one update of $\mPi$ defined in Lemma~\ref{lem:PII}, giving rise to the updated matrix:
$$
\mPi^{\prime}=\alpha(\mathbf{I}-(1-\alpha) (\PP+\aaa(-\bb)^{\top}))^{-1}.
$$
Leveraging the rank-1 update, the Sherman-Morrison formula~\cite{Me73} enables us to articulate the updated matrix $\mPi^{\prime}$ as 
\begin{align}
    \label{eq:sm}
    \mPi^{\prime} = \mPi - \frac{(1-\alpha)\mPi\aaa\bb^{\top}\mPi}{\alpha+(1-\alpha)\bb^{\top}\mPi\aaa}.
\end{align}
Let $\sigma = \frac{1}{n}\one^\top\mPi, \eta=\mPi\one_S$, and $\tau=\alpha+(1-\alpha)\bb^{\top}\mPi\aaa$. Therefore, the PR fairness gain for group $S$ of rewiring edge $(i, j)$ to $(i,k)$ is 

 \begin{align}
    \Delta((i,j,k), S) = \vvv^{\top}\mPi^\prime\one_{S} - \vvv^{\top}\mPi\one_{S} 
      = -\vvv^\top\frac{(1-\alpha)\mPi\aaa\bb^{\top}\mPi\one_{S}}{\alpha+(1-\alpha)\bb^{\top}\mPi\aaa}
     = -(1-\alpha)\cdot\frac{\sigma\aaa\bb^\top\eta}{\tau}.\label{eq:prgain}
 \end{align}

Let $\tilde{\sigma} = \ee_v^{\top}\mPi$ which is exactly the $v$-th row of matrix $\mPi$. The PPR fairness gain for group $S$ with regard to a source node $v$ is 
 \begin{align}
     \Delta_v((i,j,k), S) = \ee_v^{\top}\mPi^\prime\one_{S} - \ee_v^{\top}\mPi\one_{S}
      = -\frac{(1-\alpha)\ee_v^\top\mPi\aaa\bb^{\top}\mPi\one_S}{\alpha+(1-\alpha)\bb^{\top}\mPi\aaa}
     = -(1-\alpha)\cdot\frac{\Tilde{\sigma}\aaa\bb^\top\eta}{\tau}\label{eq:pprgain}.
 \end{align}
 
The $i$-th element in $\sigma$ can be considered as a measure of centrality of node $i$. Specifically, it represents the average allocation of PPR to node $i$ for all the nodes. A higher value signifies node $i$'s increased centrality in the network. The $i$-th element in $\tilde{\sigma}$ is exactly the PPR value that node $v$ allocates to node $i$. The $i$-th element in $\eta$ can be considered as a measure of proximity between node $i$ and group $S$. It sums up the PPR allocated by node $i$ to the nodes in set $S$. A larger value implies node $i$'s greater proximity to the group $S$.

\begin{lemma}
    $\tau$ is always positive. $0\leq\sigma_i\leq 1$, $0\leq\tilde{\sigma}_i\leq 1$, and $0\leq\eta_i\leq 1$ for all $i\in V$.
\end{lemma}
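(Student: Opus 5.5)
The plan is to first establish the entrywise properties of $\mPi$ and then read off every claim from them. Since $\PP=\DD^{-1}\A$ is row-stochastic and nonnegative, and $0<1-\alpha<1$, the Neumann series
$$\mPi=\alpha\sum_{t\ge 0}(1-\alpha)^t\PP^t$$
converges. Each term $\PP^t$ is nonnegative and row-stochastic. Hence $\mPi$ is entrywise nonnegative, and
$$\mPi\one=\alpha\sum_{t}(1-\alpha)^t\one=\one,$$
so $\mPi$ is itself row-stochastic. (We assume every node has positive out-degree so that $\PP$ is well defined, as the paper implicitly does.) By Lemma~\ref{lem:PII}, each row of $\mPi$ is a personalized PageRank probability vector, and all entries lie in $[0,1]$.

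The bounds on the vectors follow directly.
\begin{itemize}
\item $\tilde{\sigma}_i=\mPi_{vi}\in[0,1]$.
\item $\eta_i=\sum_{s\in S}\mPi_{is}$ is a partial row sum of a nonnegative row-stochastic row, so $0\le\eta_i\le 1$.
\item $\sigma_i=\frac1n\sum_{u}\mPi_{ui}$ is an average of numbers in $[0,1]$, so it also lies in $[0,1]$.
\end{itemize}

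For $\tau$, write
$$\bb^\top\mPi\aaa=p_{ij}\,(\mPi_{ji}-\mPi_{ki}),$$
so that $\tau=\alpha+(1-\alpha)p_{ij}(\mPi_{ji}-\mPi_{ki})$. Since $\mPi_{ji}\ge0$, we have $\tau\ge \alpha-(1-\alpha)p_{ij}\mPi_{ki}$. It therefore suffices to show that $(1-\alpha)p_{ij}\mPi_{ki}<\alpha$. The crude bounds $p_{ij}\le1$ and $\mPi_{ki}\le 1$ only give $\tau\ge 2\alpha-1$, which is not positive for $\alpha=0.15$. This is the main obstacle, and it needs a sharper bound on $\mPi_{ki}$ for $k\neq i$.

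To get that bound, I will use the first-visit decomposition. Since $k\ne i$, the walk started at $k$ must first hit $i$ before any visit there counts. This gives
$$\mPi_{ki}=f_{ki}\,\mPi_{ii},$$
where $f_{ki}$ is the probability that the walk from $k$ hits $i$ before termination. We have $f_{ki}\le 1-\alpha$, because at least one step is needed and it survives with probability $1-\alpha$. Also $\mPi_{ii}\le1$. Hence $\mPi_{ki}\le 1-\alpha$.

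A cleaner route is to bound $\tau$ directly. Using the decomposition above,
$$\tau=\alpha+(1-\alpha)p_{ij}\mPi_{ii}(f_{ji}-f_{ki}).$$
Write $\mPi_{ii}=\alpha/(1-q_i)$, where $q_i$ is the return probability to $i$. Then
$$q_i=(1-\alpha)\sum_{l}p_{il}f_{li}\ge(1-\alpha)p_{ij}f_{ji},$$
and the same identity evaluated after the rewiring, i.e.\ the return probability in $\PP'$, must be $<1$. Equivalently, $\tau/\alpha$ should equal the ratio $(1-q_i')/(1-q_i)$ of the non-return probabilities after and before the rewiring. I will verify this identity: it is essentially the matrix determinant lemma, since $\tau/\alpha=\det(\II-(1-\alpha)\PP')/\det(\II-(1-\alpha)\PP)$. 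Both determinants are positive, because $\II-(1-\alpha)\PP$ and $\II-(1-\alpha)\PP'$ are strictly diagonally dominant M-matrices, with $\PP'$ still row-stochastic. This determinant-ratio argument is the step I would commit to, as it gives $\tau>0$ immediately.
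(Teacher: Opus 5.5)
Your argument is correct. For the bounds on $\sigma,\tilde\sigma,\eta$ it makes explicit, via the Neumann series showing $\mPi$ is nonnegative and row-stochastic, what the paper asserts ``from the definitions.'' For $\tau$ it takes a genuinely different route.

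The paper argues probabilistically. It reads $\frac{1-\alpha}{\alpha}p_{ij}\mPi_{xi}$ as the expected number of traversals of $(i,j)$ by the walk from $x$ that stops with probability $\alpha$ per step. It then splits the walk from $k$ at its first visit to $j$: there is at most one traversal up to that visit, occurring with probability at most $(1-\alpha)^2p_{ij}$, and afterwards at most a $(1-\alpha)$ fraction of the traversals from $j$. Rearranging gives $(1-\alpha)p_{ij}\mPi_{ki}\le\alpha+(1-\alpha)p_{ij}\mPi_{ji}$.

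You instead apply the matrix determinant lemma, $\tau/\alpha=\det(\II-(1-\alpha)\PP')/\det(\II-(1-\alpha)\PP)$. Both determinants are positive by strict diagonal dominance, because $\PP'=\PP-\aaa\bb^{\top}$ is still nonnegative and row-stochastic. This buys you the following:
\begin{itemize}
\item Strict positivity with no path counting. The paper's write-up stops at $\tau\ge0$ and needs the slack $(1-\alpha)^2p_{ij}<1$ to reach $\tau>0$.
\item Uniform treatment of weighted graphs and self-loops, with no case analysis.
\item A quantitative bound. Your identity $\tau=\alpha(1-q_i')/(1-q_i)=\alpha\,\mPi_{ii}/\mPi'_{ii}$ is correct: the rewiring touches only row $i$, so it leaves the $(i,i)$ cofactor unchanged, equivalently the hitting probabilities $f_{li}$ for $l\neq i$. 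Combined with $\alpha\le\mPi_{ii},\mPi'_{ii}\le1$, it yields $\alpha^2\le\tau\le1$, which is relevant to the paper's later choice to drop $\tau$.
\end{itemize}
What the paper's approach buys is staying inside the random-walk picture it uses throughout.

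In a write-up, drop your intermediate bound $\mPi_{ki}\le1-\alpha$. On its own it gives only $\tau\ge\alpha-(1-\alpha)^2p_{ij}$, which is negative for $\alpha=0.15$ and $p_{ij}=1$, so it is a dead end. The determinant step is what carries the proof.
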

\begin{proof}
    For a node $x$, $\frac{1-\alpha}{\alpha}p_{ij}\mPi_{xi}$ is the probability we traverse the edge $(i,j)$ in a walk starting at $x$. Now, the probability that we reach $j$ from $k\notin\{i,j\}$ is at most $(1-\alpha)$, and the probability that we traverse $(i,j)$ from $k$ without first visiting $j$ is at most $(1-\alpha)^2p_{ij}$. Since $\alpha>0$ and $p_{ij}\leq 1-\alpha$, therefore, we have
    \begin{align}
        \notag\frac{1-\alpha}{\alpha}p_{ij}\mPi_{ki}\leq(1-\alpha)^2p_{ij}+\frac{(1-\alpha)^3}{\alpha}p_{ij}\mPi_{ji}
    \leq 1+\frac{1-\alpha}{\alpha}p_{ij}\mPi_{ji},
    \end{align}
    and hence, we have $\tau\geq0$.
    
    From the definitions of $\sigma$, $\tilde{\sigma}$ and $\eta$, we can naturally derive that every element in them is in $[0,1]$.
\end{proof}

\textit{Intuitions.} $\tau$ is driven by the difference between two entries of $\mPi$, whereas $\sigma\aaa$ arises from the average of a single column of  $\mPi$ and $\bb^\top\eta$ is driven by the difference between two partial row sums (indexed by $S$) of $\mPi$. Thus, the variation in $\sigma\aaa\bb^\top \eta$ generally surpasses that in $\tau$, and large $\sigma \aaa\bb^\top\eta$ values mostly dominate small values of $\tau$. Therefore, intuitively, we argue that we just need to select a node $i$ that is relatively central in the network (measured by $\sigma_i$), disconnect it from a neighbor $j$ that is relatively distant from the set $S$ (measured by $\eta_j$), and instead connect it to a neighbor $k$ that is relatively close to the set $S$ (measured by $\eta_k$). This intuition is verified by the example network in Figure~\ref{fig:illu} (a), where the rewiring $(5,4,8)$ maximizes PR fairness of the disadvantaged group $S$. Node 5 is obviously central in the network and $\mpi_4(S) = 0.1< \mpi_8(S)=0.21$. This argument is the main intuition behind our strategy of reducing candidates, explained in the next section. As for the optimization of PPR fairness, the variation in $\tilde{\sigma}\aaa\bb^\top \eta$ generally surpasses that in $\tau$, and large $\tilde{\sigma}\aaa\bb^\top\eta$ values mostly dominate small values of $\tau$. Therefore, intuitively, we argue that we just need to select a node $i$ that is relatively close to the source node $v$ in the network (measured by $\tilde{\sigma}_i$), disconnect it from a neighbor $j$ with the smallest value in $\eta$, and instead connect it to a neighbor $k$ with the 
largest value in $\eta$. 

We now introduce our first algorithm, designated as \textsc{Exact}, the pseudocode of which is delineated in Algorithm~\ref{al-optimal}. The \textsc{Exact} algorithm employs an edge rewiring strategy to systematically modify the graph structure by iteratively selecting the most advantageous local edge rewiring at each step. The algorithm commences by initializing an empty solution set $R$. Subsequently, it proceeds to augment $R$ with $b$ edge rewirings through an iterative process. During each iteration, \textsc{Exact} meticulously examines all edges $(i, j) \in E$ within the graph. For each edge, it computes the impact $\Delta((i,j,k), S)$ of rewiring to every potential target node $k \in V$. This comprehensive evaluation ensures that the algorithm identifies the optimal rewiring choice at each step. The computational complexity of \textsc{Exact} is primarily determined by three key operations: the calculation of the matrix $\mPi$, which, using a naive approach, requires $O(n^3)$ time; the computation of $\Delta$, which, while taking constant time $O(1)$ for each calculation, is executed $O(mn)$ times throughout the algorithm's run; and the updating of $\mPi$ after each rewiring, necessitating $O(n^2)$ time per update. Aggregating these computational requirements and considering that the process is repeated $b$ times for the $b$ rewirings, we arrive at an overall time complexity of $O(n^3 + bmn)$ for the \textsc{Exact} algorithm. 

\begin{remark}\label{rem:exactv}
    The algorithm for exactly optimizing PPR fairness is similar to \textsc{Exact} which could be obtained by replacing $\Delta((i,j,k), S)$ in Line 4 of Algorithm~\ref{al-optimal} with the PPR gain $\Delta_v((i,j,k), S)$ defined in Eq.~\eqref{eq:pprgain}. We call the exact greedy algorithm for optimizing PPR fairness \texttt{Exactv}.
\end{remark}

\normalem
\begin{small}
\begin{algorithm}[tb]
	\caption{\textsc{Exact}$(\calG, b, S)$}
	\label{al-optimal}
	\Input{
		$\calG$: a digraph; $b$ : the rewiring budget; $S$: the target group
	}
	\Output{
		$R$: A series of edge rewirings with $|R|=b$
	}
	Initialize solution $R= \emptyset$ \;
	Compute $\mPi$ defined in Eq.~\eqref{eq:22}\;
	\For{$ t=1 $ to $ b $}{
        $ (i^*, j^*, k^*) \leftarrow \argmax_{(i,j,k), s.t. (i,j)\in E, (i,k)\notin E, i\neq k} \Delta((i,j,k), S) $\;
		Update solution $R \gets R\cup \{ (i^*,j^*,k^*) \}$ \;	
        Update the matrix $\mPi$ according to Eq.~\eqref{eq:sm}\;
        Update graph $\calG \gets (V, E\setminus\{(i^*,j^*)\}\cup\{(i^*,k^*)\})$ \;	
	}	
	\Return $R$.
\end{algorithm}
\end{small}

\section{Fast Sampling Algorithm}\label{sec:fast}
In this section, we first establish a connection between our objective function and spanning rooted forests. We then develop a linear time algorithm to compute $\Delta$ and $\Delta_v$, which allows us to speed up the algorithm \textsc{Exact}. Our fast algorithm is based on the sampling of spanning rooted forests, the main ingredient of which is an extension of  Wilson’s algorithm~\cite{Wi96,WiPr96}. 

\subsection{Novel Interpretation for PageRank via Spanning Rooted Forest } 

In this section, we establish a connection between the PageRank derived matrix $\mPi$ (see Section~\ref{sec:2.2}) and the forest matrix (see Section~\ref{sec:2.3}) of a new reweighted graph. 

We first note that Eq.~\eqref{eq:preq1} can be transformed into an equivalent linear system as illustrated below:
\begin{align}\label{eq:preq2}
    \tilde{\mpi}(\LL+\bm{\Lambda})=\alpha/(1 - \alpha) \vvv,
\end{align}
where $\bm{\Lambda} = \alpha/(1 - \alpha) \DD $ and $\tilde{\mpi} = \mpi\DD^{-1}$. Drawing insights from the above equation, it can be deduced that:
\begin{align}\label{eq:22}
    \mPi=(\bm{\Lambda}^{-1}\LL + \II)^{-1},
\end{align}
which is exactly the forest matrix~\cite{Golender1981GraphPM, Chebotarev2006SpanningFA} of a new graph. We summarize the process of building this new graph in the following lemma.
\begin{lemma}\label{lem:reweight}
    Let $\calG_r=(V_r, E_r, w_r)$ denote the new graph, which is initially set empty. Then we perform the following steps: (\romannumeral1) Copy all the nodes in $V$ and all the edges in $E$ to $V_r$ and $E_r$ respectively; (\romannumeral2) For each edge $(i,j)\in E_r$, we set its weight as $w_r(i,j)=\frac{w(i,j)}{\alpha/(1 - \alpha)\dd_i}$.
\end{lemma}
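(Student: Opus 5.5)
The lemma claims that the graph $\calG_r$ built this way has forest matrix $\bOmega(\calG_r)=(\LL_r+\II)^{-1}$ equal to $\mPi$. The plan is to write down the Laplacian $\LL_r=\DD_r-\A_r$ of $\calG_r$ and check that $\LL_r=\bm{\Lambda}^{-1}\LL$. Then Eq.~\eqref{eq:22} gives $\mPi=(\LL_r+\II)^{-1}=\bOmega(\calG_r)$. So the argument has two pieces: an entrywise computation of $\LL_r$, and a justification of Eq.~\eqref{eq:22} itself from Lemma~\ref{lem:PII}.

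\textbf{Step 1: Eq.~\eqref{eq:22}.} By Lemma~\ref{lem:PII}, $\mPi=\alpha(\II-(1-\alpha)\PP)^{-1}$. Substituting $\PP=\DD^{-1}\A=\II-\DD^{-1}\LL$ gives
\begin{align*}
\II-(1-\alpha)\PP=\alpha\II+(1-\alpha)\DD^{-1}\LL .
\end{align*}
Dividing by $\alpha$, we get $\mPi^{-1}=\II+\frac{1-\alpha}{\alpha}\DD^{-1}\LL=\II+\bm{\Lambda}^{-1}\LL$, using $\bm{\Lambda}=\frac{\alpha}{1-\alpha}\DD$. This requires $\dd_i>0$ for every $i$, so that $\DD$ is invertible. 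That is implicit in the definition $\PP=\DD^{-1}\A$, and I will state it as an assumption.

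\textbf{Step 2: the Laplacian of $\calG_r$.} The edge set of $\calG_r$ equals $E$. The weights $w_r(i,j)=w(i,j)/\bigl(\tfrac{\alpha}{1-\alpha}\dd_i\bigr)$ are positive, so $\calG_r$ is a legitimate weighted digraph in the sense of Section~\ref{sec:2.2}. Its adjacency matrix is therefore $\A_r=\bm{\Lambda}^{-1}\A$, obtained by scaling row $i$ of $\A$ by $\lambda_i^{-1}$ with $\lambda_i=\frac{\alpha}{1-\alpha}\dd_i$. Its out-degrees are $(\dd_r)_i=\sum_j\A_{ij}/\lambda_i=\dd_i/\lambda_i=\frac{1-\alpha}{\alpha}$, so $\DD_r=\bm{\Lambda}^{-1}\DD$. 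Hence
\begin{align*}
\LL_r=\DD_r-\A_r=\bm{\Lambda}^{-1}(\DD-\A)=\bm{\Lambda}^{-1}\LL .
\end{align*}
Combining with Step 1 gives $\bOmega(\calG_r)=(\LL_r+\II)^{-1}=\mPi$.

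\textbf{Forest interpretation.} By the matrix-forest theorem quoted in Section~\ref{sec:2.2}, $\omega_{ij}=\varepsilon(\calF_{ji})/\varepsilon(\calF)$ with weights taken in $\calG_r$. This yields the entrywise forest interpretation $\mPi_{ij}=\varepsilon(\calF_{ji}(\calG_r))/\varepsilon(\calF(\calG_r))$.

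The main point of care is conventions rather than difficulty. The Laplacian here is an out-degree, row-based Laplacian, and the matrix-forest theorem must be used in the orientation consistent with that convention ($\omega_{ij}$ indexed by $\calF_{ji}$). Row scaling by $\bm{\Lambda}^{-1}$ works precisely because the Laplacian is row-based. I would double-check that the cited theorem of \cite{ChSh97} uses the same row convention before quoting it.
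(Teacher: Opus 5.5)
Your proposal is correct and takes the same route as the paper. The paper also reaches $\mPi=(\bm{\Lambda}^{-1}\LL+\II)^{-1}$, via the linear system Eq.~\eqref{eq:preq2} rather than directly from Lemma~\ref{lem:PII}; this difference is only cosmetic. It then treats $\bm{\Lambda}^{-1}\LL$ as the Laplacian of $\calG_r$ without further argument, whereas you check this entrywise, including that every out-degree in $\calG_r$ equals $\frac{1-\alpha}{\alpha}$; your orientation convention also matches the paper's $\calF_{ji}$ and the root pointers produced by \textsc{RandomForest}.
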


The graph $\calG_r$ is actually a reweighted graph over graph $\calG$. After this transformation, we can conclude that computing the PageRank derived matrix $\mPi$ for graph $\calG$ is equivalent to computing the forest matrix for the reweighted graph $\calG_r$, which provides us the insight to design fast approximate algorithms to speed up the computations through Monte Carlo sampling. 

We now interpret the concerned quantities in $\Delta$ and $\Delta_v$ using spanning rooted forests. Let $\bOmega(\calG_r)$ denote the forest matrix of the reweighted graph $\calG_r$ and $\omega_{ij}$ denote its element in the $i$-th row and $j$-th column. It is obvious that $\bOmega(\calG_r) = \mPi$. We now provide a novel interpretation and another expression of $\sigma_s$, $\tilde{\sigma}_s$ and $\eta_s$ for each node $s\in V_r$. Recall that $\sigma_s=\frac{1}{n}\sum_{u\in V}\mpi_u(s)$ $\tilde{\sigma}_s = \mpi_u(s)$, and  $\eta_s=\sum_{u\in S}\mpi_s(u)$. Together with the fact that $\omega_{i j}=\varepsilon\left(\calF_{j i}\right) / \varepsilon(\calF)$~\cite{ChSh97,ChSh98, Ch82}, where $\calF$ and $\calF_{ji}$ are  sets of spanning rooted forests of graph $\calG_r$ (see Section 2.2), we have the following theorem.

\begin{theorem}\label{the:interpretation}
     For a graph $\calG = (V, E)$, and any node $s\in V$, we can interpret the concerned quantities as 
    \begin{gather} 
        \notag\sigma_s = \frac{1}{n}\sum_{u \in V}\omega_{us}  = \frac{\sum_{F\in\calF}|M(F, s)|\varepsilon(F)}{n\varepsilon(\calF)},\, 
        \notag\Tilde{\sigma}_s = \omega_{vs} = \frac{\varepsilon(\calF_{sv})}{\varepsilon(\calF)},\\
        \notag \eta_s = \sum_{u \in S}\omega_{su} = \frac{\sum_{u \in S}\varepsilon(\calF_{us})}{\varepsilon(\calF)}.
    \end{gather}
\end{theorem}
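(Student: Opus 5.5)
The plan is to substitute the forest-matrix identity into the definitions, since everything reduces to $\mPi=\bOmega(\calG_r)$ (Eq.~\eqref{eq:22} and Lemma~\ref{lem:reweight}). First I check that $\mPi=(\bm{\Lambda}^{-1}\LL+\II)^{-1}$. Because $\bm{\Lambda}^{-1}\LL=\frac{1-\alpha}{\alpha}(\II-\PP)$, we get
\[
\bm{\Lambda}^{-1}\LL+\II=\frac{1}{\alpha}\bigl(\II-(1-\alpha)\PP\bigr),
\]
which agrees with Lemma~\ref{lem:PII}(i). The matrix $\bm{\Lambda}^{-1}\LL$ is the Laplacian of $\calG_r$ with weights $w_r(i,j)=\frac{w(i,j)}{\alpha/(1-\alpha)\dd_i}$. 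So $\mPi=\bOmega(\calG_r)$, and by the cited matrix-forest theorem $\omega_{ij}=\varepsilon(\calF_{ji})/\varepsilon(\calF)$, with forest weights computed in $\calG_r$.

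Next I rewrite each quantity entrywise.

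For $\sigma$: since $\sigma=\frac1n\one^\top\mPi$, we have $\sigma_s=\frac1n\sum_u\omega_{us}=\frac{1}{n\varepsilon(\calF)}\sum_{u}\varepsilon(\calF_{su})$.

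For $\tilde\sigma$: since $\tilde\sigma=\ee_v^\top\mPi$, we have $\tilde\sigma_s=\omega_{vs}=\varepsilon(\calF_{sv})/\varepsilon(\calF)$.

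For $\eta$: since $\eta=\mPi\one_S$, we have $\eta_s=\sum_{u\in S}\omega_{su}=\sum_{u\in S}\varepsilon(\calF_{us})/\varepsilon(\calF)$.

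The last two are immediate once the index convention $\omega_{ij}\leftrightarrow\calF_{ji}$ is applied consistently.

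The only step needing an argument is the double-counting identity for $\sigma_s$:
\[
\sum_{u\in V}\varepsilon(\calF_{su})=\sum_{F\in\calF}|M(F,s)|\,\varepsilon(F).
\]
To prove it, expand $\varepsilon(\calF_{su})=\sum_{F\in\calF}\varepsilon(F)\,\mathbb{1}[u\in M(F,s)]$. This holds because $F\in\calF_{su}$ exactly when $u$ lies in the tree of $F$ rooted at $s$, that is, $u\in M(F,s)$. Swapping the two finite sums turns the inner count over $u$ into $|M(F,s)|$. If $s\notin\rho(F)$, then $M(F,s)=\emptyset$ and the term vanishes, which is consistent with the convention that an empty set has weight zero.

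I expect the main obstacle to be bookkeeping rather than mathematics. One point is keeping the row/column convention of $\omega_{ij}$ against $\calF_{ji}$ straight. The other is confirming that the matrix-forest theorem applies to the non-symmetric, row-normalized Laplacian of the reweighted digraph with in-forests (edges oriented toward roots). I would settle this by checking that $\LL_r=\bm{\Lambda}^{-1}\LL$ has off-diagonal entries $-w_r(i,j)$ and diagonal entries equal to the out-weights. That is exactly the setting of the cited result~\cite{ChSh97,ChSh98}.
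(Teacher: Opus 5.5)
Your proposal is correct and follows the same route as the paper: identify $\mPi=(\bm{\Lambda}^{-1}\LL+\II)^{-1}$ as the forest matrix $\bOmega(\calG_r)$ of the reweighted graph, then substitute $\omega_{ij}=\varepsilon(\calF_{ji})/\varepsilon(\calF)$ entrywise into $\sigma=\frac{1}{n}\one^\top\mPi$, $\tilde{\sigma}=\ee_v^\top\mPi$ and $\eta=\mPi\one_S$. You also make explicit two steps the paper leaves implicit: the check $\bm{\Lambda}^{-1}\LL+\II=\frac{1}{\alpha}(\II-(1-\alpha)\PP)$, and the double-counting identity $\sum_{u}\varepsilon(\calF_{su})=\sum_{F\in\calF}|M(F,s)|\,\varepsilon(F)$.
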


The above theorem indicates that $\sigma_s$ is the sum of the weight $\varepsilon(F)$ of all $F\in \calF$ weighted by the number of nodes in the component of $F$ that rooted at $s$, divided by $n$ and the total weight of all spanning rooted forests, i.e., $\varepsilon(\calF)$. $\Tilde{\sigma}_s$ is the weight of $\calF_{sv}$ divided by the weight of $\calF$. $\eta_s$ is the sum of the weight of all spanning forests in which $s$ is rooted at a node in $S$, divided by $\calF$'s weight. 

\textbf{Reducing the Number of Candidates.} The reader may have noticed that in this section, we provide fresh interpretations for all the components except $\tau$ in $\Delta$ and $\Delta_v$. This is because, as highlighted in the \textit{intuitions} in Section~\ref{sec:exact}, $\tau$ could be safely omitted without any significant impact. Additionally, we note that there is no need to consider all possible rewiring targets, but instead focus on a limited set $K$ of top nodes in $\eta$, which could be identified in $O(n)$ time. We set the restricted rewiring target node set $K$ as the top $d_{max}$ nodes that have the largest value in $\eta$, which ascertains that despite restricting the targets, for each edge $(i,j)\in E$, there is at least one unconnected node in $K$. Hereafter, instead of using $\Delta$, we opt to compute $\widehat{\Delta} = \Delta\tau$ and focus on a limited set $K$ of top nodes in $\eta$ as a heuristic. In the same manner, we use $\widehat{\Delta}_v = \Delta_v\tau$ to replace $\Delta_v$, this strategy is widely used in prior works, such as in~\cite{Coupette2023ReducingET}. 

\subsection{Approximating via an Extension of Wilson's Algorithm}
According to Theorem~\ref{the:interpretation}, we can estimate $\sigma$, $\Tilde{\sigma}$ and $\eta$ by sampling spanning rooted forests in the reweighted graph $\calG_r$. Specifically, if we can sample spanning forests according to their weights that is $\mathbb{P}\left(\calF=F\right) \propto \varepsilon(F)=\prod_{(i, j) \in F} w_r(i, j)$, then an unbiased estimator of $\sigma$, $\Tilde{\sigma}$ and $\eta$ can be easily derived. The challenge lies in sampling from this weighted distribution. To address this, we propose a loop-erased random walk approach to sample a spanning rooted forest according to its weight. This method extends the classic Wilson algorithm for sampling spanning trees on graphs~\cite{Wilson1996GeneratingRS}.

We first briefly introduce the loop-erasure operation on a random walk~\cite{La80}, a process involving the removal of loops in the order they occur. In a graph $\calG$ and a random walk trajectory $\gamma=\left(v_1, \cdots, v_l\right)$ on $\calG$, we define the loop-erased trajectory $\operatorname{LE}(\gamma)=\left(v_{i_1}, \cdots, v_{i_j}\right)$ by eliminating loops. Specifically, $i_j$ is determined iteratively: $i_1=1$, and $i_{j+1}=\max \left\{i \mid v_i=v_{i_j}\right\}+1$. Loop-erased random walks avoid self-intersections and halt when revisiting previous trajectory. For example, in Figure~\ref{fig:illu} (b), for a random walk $\gamma=\left(1, 2, 3, 4, 5, 3, 6\right)$, its loop-erased trajectory becomes $\operatorname{LE}(\gamma)=\left(1, 2, 3, 6\right)$ after removing the loop $\left(3, 4, 5, 3\right)$. To apply loop erasure efficiently, we record the next node during the random walk, and upon completion, we retrace the trajectory by starting from the beginning and progressing to the recorded next node until encountering the prior trajectory.

\normalem
\begin{small}
\begin{algorithm}
	\caption{$\textsc{RandomForest}(\calG)$}
	\label{alg:rf}
	\Input{ $\calG=(V,E, w)$}
	\Output{$Root[u] \forall u\in V$}
	$InForest[u]  \leftarrow  false, Next[u] \leftarrow -1, Root[u]\leftarrow 0 \forall u\in V$\;
	\For{$ i = 1 $ to $ n $}
	{$ u \leftarrow i $\; 
		\While{not $InForest[u]$}{
			\If{\textsc{Rand}()$\leq \frac{1}{1 + \dd_u}$}{
				$InForest[u]  \leftarrow  true$\;
                $Root[u] \leftarrow u$\; 
                $Next[u] = -1 $\;
			}
			\Else{
				$Next[u] \leftarrow  \textsc{RandomNeighbor}(\calG, u$)\; 
				$u \leftarrow Next[ u ]$\;
			}
		}
	  $r \leftarrow  	Root[u ],  u\leftarrow i $\;
		\While{not InForest[$ u $]}{
			$InForest[u] \leftarrow true$\;
            $Root[u ] \leftarrow r$\;
            $u \leftarrow  Next[ u ]$\;
		}
	}
 \Return $Root$ \;
\end{algorithm}
\end{small}

Wilson's algorithm, based on loop-erasure applied to a random walk, was introduced to generate a uniform spanning tree rooted at a specified node~\cite{Wi96, WiPr96}. The algorithm consists of three key steps: initializing a tree with the root node, performing an unbiased random walk until it reaches a node within the tree, and applying the loop-erasure operation to add nodes and edges to the tree. This process continues until all nodes are included in the tree, ensuring an equal probability distribution for all possible spanning trees~\cite{Wi96, WiPr96}.

For a digraph $\calG= (V,E, w)$, we can adapt Wilson's Algorithm to produce a spanning rooted forest $F \in \calF$, following a procedure akin to that in~\cite{AvLuGaAl18,PiAmBaTr21}. This process involves three essential steps:
\begin{itemize}[leftmargin=*]
    \item The construction of an extended digraph $\calG'=(V',E', w^\prime) $ from $\calG= (V,E, w)$ by introducing a new node $\xi$ and the requisite edges. Specifically, $V' =V \cup \{\xi\}$ and $E'= E\cup \{(i,\xi), (\xi,i)\}$ with $w^\prime(i,\xi) = w^\prime(\xi, i) = 1$ for all $i\in V$. The weights of the original edges in $E$ remain unchanged in $E'$.
    \item The application of Wilson's algorithm to generate a uniform spanning tree $T$ for graph $\calG'$ with $\xi$ as the root node.
    \item The removal of edges $(i,\xi) \in T$ to obtain a spanning forest $F \in \calF$ of $\calG$. By designating $\rho(F) = \{ i\mid(i,\xi)\in T \}$ as the set of roots for trees in $F$, we transform $F$ into a spanning rooted forest of $\calG$.
\end{itemize}

Following the three steps above, in Algorithm~\ref{alg:rf} we present an algorithm \textsc{RandomForest} to generate a uniform spanning rooted forest $F$ of digraph $\calG$. We track node membership in the forest using the bool vector $InForest$ and initialize $Next[u]$ which records the next node in random walk step to $-1$ for all nodes $u\in V$. We use $Root$ to record the root of every node. The algorithm initiates a random walk at node $u$ in the extended graph $\calG'$ (Line 3) to form a forest branch. The probability of reaching node $\xi$ from $u$ is $1/(1+d_u)$. If a random number generated in Line 5 (using \textsc{Rand}()) satisfies the given condition, the walk transitions to node $\xi$. As per our analysis, nodes directly connected to $\xi$ in $\calG'$ are part of the root set $\rho(F)$. When the condition in Line 5 is not met, \textsc{RandomNeighbor} is used to select a random node $i$ from $\calN(u)$ with a probability of $\frac{w(u,i)}{\sum_{j\in\calN(u)}w(u,j)}$ (Line 8). Subsequently, $u$ is updated to $Next[u]$, and the process returns to Line 4. The \textit{for} loop concludes when the random walk encounters a node already present in the forest. Afterward, a new branch is formed. In Lines 10-12, the branch's loop-erasure is added to the forest, along with an update to $Root$. In Line 13, the forest's weight is computed by multiplying the weights of all its edges. Finally, \textsc{RandomForest} returns $Root$  along with the weight of the sampled forest.

\begin{theorem}\label{the:complexity}
    For graph $\calG_r$, \textsc{RandomForest} runs in time $O(n)$.
\end{theorem}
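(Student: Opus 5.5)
The plan is to bound the \emph{expected} running time of \textsc{RandomForest} on $\calG_r$. The work splits into the loop-erasure passes (Lines 10--12) and the random-walk steps (Lines 4--9). Each loop-erasure pass marks every node it visits as $InForest$ and never revisits it, so all passes together take $O(n)$ time. The remaining task is to bound the total number of random-walk steps, i.e.\ the total number of calls to \textsc{Rand} and \textsc{RandomNeighbor}. Throughout, $\alpha$ is treated as a constant, and \textsc{RandomNeighbor} is assumed to take $O(1)$ time, for instance via alias tables built once in $O(m)$ preprocessing, or simply because sampling proportionally to $w$ on $\calG_r$ equals sampling proportionally to the original weights.

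The key observation concerns the stopping probability at each node of the reweighted graph. By Lemma~\ref{lem:reweight}, the out-degree of node $u$ in $\calG_r$ is
\[
\dd^r_u=\sum_j \frac{w(u,j)}{\alpha/(1-\alpha)\,\dd_u}=\frac{1-\alpha}{\alpha}.
\]
Hence every walk step stops at $\xi$ (becomes a root) with the same probability
\[
\frac{1}{1+\dd^r_u}=\alpha,
\]
independently of $u$. Consequently the walk in Lines 4--9 is exactly a random walk with restart probability $\alpha$.

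Next, invoke the standard fact about Wilson's algorithm: the expected total number of walk steps equals the trace of the Green's function of the walk killed at the root $\xi$ (Marchal's formula, or Wilson's original analysis). Concretely, the expected number of steps is $\sum_{u} G(u,u)$, where $G(u,u)$ is the expected number of visits to $u$ by a walk started at $u$ before it is absorbed at $\xi$. The order in which walks are started does not matter for this sum. Here that sum is
\[
\sum_u \frac{\mPi_{uu}}{\alpha}=\frac{\mathrm{tr}(\mPi)}{\alpha}\le \frac{n}{\alpha},
\]
using $0\le\omega_{uu}\le1$ from Section~\ref{sec:2.2}, or directly the fact that a walk killed with probability $\alpha$ per step makes at most $1/\alpha$ expected steps in total. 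If I want to avoid citing Marchal, a cruder bound also works: each of the at most $n$ walks started in Line 3 takes geometrically many steps with mean $1/\alpha$, because it stops with probability $\alpha$ at every step regardless of position. This alone gives expected total $O(n/\alpha)=O(n)$ by linearity of expectation.

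The main obstacle is making precise that ``linear time'' means expected time, and justifying the $O(1)$ cost of \textsc{RandomNeighbor}. For the latter I would note that the neighbor distribution of $\calG_r$ equals that of $\calG$, so a sample can be drawn by picking a uniform index into a precomputed cumulative or alias array. The simple geometric-per-walk argument suffices for the bound, so I would present that, and mention the $\mathrm{tr}(\mPi)/\alpha$ identity as the sharper count.
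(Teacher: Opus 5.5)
Your proof is correct, but your main argument takes a different and more elementary route than the paper's.

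\textbf{The paper's route.} It cites Wilson's result that the expected running time is the mean commute time, and adapts Marchal's matrix form of it. This gives the exact expected number of walk steps as $\mathrm{tr}\bigl((\bm{\Lambda}^{-1}\LL+\II)^{-1}(\II+\frac{1-\alpha}{\alpha}\II)\bigr)=\mathrm{tr}(\mPi)/\alpha$. It then bounds this by $n/\alpha$ using $\omega_{ii}\le 1$. This is exactly the ``sharper count'' you mention as a secondary remark.

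\textbf{Your route.} You make explicit a fact the paper uses only implicitly, inside the factor $\II+\frac{1-\alpha}{\alpha}\II$: after the reweighting, every node of $\calG_r$ has out-degree $(1-\alpha)/\alpha$. So each step is absorbed at $\xi$ with probability exactly $\alpha$, regardless of position. Each of the at most $n$ walks is therefore dominated by a geometric variable with mean $1/\alpha$. This gives the same $n/\alpha$ bound without appealing to cycle popping or Green's-function identities. You also account for the loop-erasure passes and the cost of \textsc{RandomNeighbor}, which the paper's proof leaves implicit.

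\textbf{What each buys.} The paper's route gives the exact expected step count $\mathrm{tr}(\mPi)/\alpha$, which lies between $n$ and $n/\alpha$ since $\mPi_{ii}\ge\alpha$. It also comes from a general formula, $\mathrm{tr}\bigl(\bOmega(\II+\DD)\bigr)$, valid for arbitrary edge weights. For the $O(n)$ conclusion, however, the constant reweighted degree is essential in both approaches, and your argument isolates that point most cleanly.
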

\begin{proof}
    According to the results in~\cite{Wi96}, the expected run time of \textsc{RandomForest} is the mean commute time. In proposition of~\cite{Ma00}, the authors have written this commute time as matrix forms. Adapting his results, we have that in our setting the expected running time of \textsc{RandomForest} equals the trace of $(\bm{\Lambda}^{-1}\LL + \II)^{-1}(\II + (1-\alpha)/\alpha\II)$. We know that the diagonal element $\omega_{ii}$ in matrix $(\bm{\Lambda}^{-1}\LL + \II)^{-1}$ is in $(0, 1)$ for each $i\in V$. Therefore, we can further derive the bound of the expected running time that $\sum_{i=1}^n\omega_{ii}(1+(1-\alpha)/\alpha)\leq\sum_{i=1}^n(1+(1-\alpha)/\alpha)=n(1+(1-\alpha)/\alpha)$, concluding the proof.
\end{proof}

To approximate $\sigma$, $\tilde{\sigma}$ and $\eta$, we first construct a reweighted graph $\calG_r$ following the steps in Lemma~\ref{lem:reweight}. Then, using \textsc{RandomForest}, we generate a set $\widehat{\calF}$ of $\psi$ random spanning rooted forests $F_1,F_2,\ldots,F_\psi$ on $\calG_r$. Define 
\begin{gather}
    \notag\sigma_s^\prime = \frac{\sum_{j=1}^\psi|M(F_j,s)|}{n\psi},\,
    \notag\tilde{\sigma}_s^\prime = \frac{\sum_{j=1}^\psi\one_{r(F_j, v) = s}}{\psi}, \,
    \notag\eta_s^\prime = \frac{\sum_{j=1}^\psi\one_{r(F_j, s)\in S}}{\psi},
\end{gather}
where $\one_{r(F_j, s)\in S}$ and $\one_{r(F_j, v) = s}$ are indicator functions that equal 1 only when the condition at the subscript is met and otherwise equal 0. Based on the above formulations, we can define three random variables for a random spanning forest: $\Bar{\sigma}_s(F_i) = \frac{|M(F_i, s)|}{n}$, $\Bar{\Tilde{\sigma}}_{s}(F_i)=\one_{r(F_i, v)=s}$, $\Bar{\eta}_s(F_i)=\one_{r(F_i, s)\in S}$ for estimating $\sigma_s$, $\tilde{\sigma}_s$ and $\eta_s$ respectively. According to~\cite{AvLuGaAl18}, the spanning rooted forest $F$ obtained from \textsc{RandomForest} is selected from $\calF$ with a probability proportional to its weight. Thus, we have $\mathbb{E} \left(\Bar{\sigma}_s(F_i)\right)= \sigma_s$, $\mathbb{E} \left(\Bar{\Tilde{\sigma}}_{s}(F_i)\right)= \tilde{\sigma}_s$ and $\mathbb{E} \left(\Bar{\eta}_s(F_i)\right)= \eta_s$, which implies that the three variables proposed above are unbiased estimators of $\sigma_s$, $\tilde{\sigma}_s$ and $\eta_s$ respectively for all $s\in V$.

Generally, the larger the value of $\psi$, the more accurate the estimations are. Next, we bound the number $\psi$ of required samplings of spanning rooted forests to guarantee a desired estimation precision by applying the following Hoeffding's inequality.

\begin{theorem}
(\textsc{Hoeffding's Inequality}~\cite{Ho63}). Let $Z_1, Z_2, \ldots, Z_{n_z}$ be independent random variables, where $Z_i\left(\forall 1 \leq i \leq n_z\right)$ is strictly bounded by the interval $\left[a_i, b_i\right]$. We define the empirical mean of these variables by $Z=\frac{1}{n_z} \sum_{i=1}^{n_z} Z_i$. Then, we have
$$
\mathbb{P}[|Z-\mathbb{E}[Z]| \geq \varepsilon] \leq 2 \exp \bigg(-\frac{2 n_z^2 \varepsilon^2}{\sum_{i=1}^{n_z}\left(b_i-a_i\right)^2}\bigg)
$$
\end{theorem}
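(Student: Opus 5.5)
The plan is the standard Chernoff (exponential moment) argument, with Hoeffding's lemma as the key ingredient. First, reduce to one tail. Set $X_i = Z_i - \mathbb{E}[Z_i]$, so each $X_i$ has mean zero and lies in an interval of length $b_i-a_i$. Then
$$
Z-\mathbb{E}[Z] = \frac{1}{n_z}\sum_{i=1}^{n_z} X_i .
$$
It suffices to show
$$
\mathbb{P}\Big[\sum_i X_i \geq n_z\varepsilon\Big] \leq \exp\Big(-\frac{2n_z^2\varepsilon^2}{\sum_i (b_i-a_i)^2}\Big).
$$
The lower tail follows by applying the same bound to the variables $-Z_i$, which lie in $[-b_i,-a_i]$, intervals of the same lengths. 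A union bound over the two tails then gives the factor $2$.

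For the upper tail, fix $\lambda>0$ and apply Markov's inequality to $e^{\lambda\sum_i X_i}$. By independence the moment generating function factorizes:
$$
\mathbb{P}\Big[\sum_i X_i\geq t\Big]\leq e^{-\lambda t}\prod_{i=1}^{n_z}\mathbb{E}\big[e^{\lambda X_i}\big],\qquad t=n_z\varepsilon .
$$

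The central step, and the one I expect to be the main obstacle, is Hoeffding's lemma. It states that a mean-zero random variable $X$ supported in $[a,b]$ satisfies
$$
\mathbb{E}[e^{\lambda X}]\leq \exp\big(\lambda^2(b-a)^2/8\big).
$$
I would prove it in three moves.

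\begin{enumerate}
\item Use convexity of $x\mapsto e^{\lambda x}$ on $[a,b]$ to write $e^{\lambda x}\leq \frac{b-x}{b-a}e^{\lambda a}+\frac{x-a}{b-a}e^{\lambda b}$.
\item Take expectations and use $\mathbb{E}[X]=0$. This gives $\mathbb{E}[e^{\lambda X}]\leq e^{L(h)}$ with $h=\lambda(b-a)$, $p=-a/(b-a)\in[0,1]$ and $L(h)=-hp+\log(1-p+pe^{h})$.
\item Check that $L(0)=L'(0)=0$ and that
$$
L''(h)=q(1-q)\leq \tfrac14,\qquad q=\frac{pe^h}{1-p+pe^h}\in[0,1].
$$
Taylor's theorem with remainder then gives $L(h)\leq h^2/8$.
\end{enumerate}

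The delicate points are the sign bookkeeping that guarantees $p\in[0,1]$, and the degenerate case $a=b$, which is trivial.

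Finally, plug the lemma into the product to get
$$
\mathbb{P}\Big[\sum_i X_i\geq t\Big]\leq \exp\Big(-\lambda t+\tfrac{\lambda^2}{8}\sum_i (b_i-a_i)^2\Big).
$$
Optimize the quadratic in $\lambda$ by choosing $\lambda = 4t/\sum_i (b_i-a_i)^2$. This yields the bound $\exp\big(-2t^2/\sum_i(b_i-a_i)^2\big)$. Substituting $t=n_z\varepsilon$ gives exactly the exponent in the statement, and combining the two tails completes the argument.

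In the paper's application, each $Z_i$ is one of the estimators $\Bar{\sigma}_s(F_i)$, $\Bar{\Tilde{\sigma}}_s(F_i)$ or $\Bar{\eta}_s(F_i)$. These lie in $[0,1]$ and are independent across the $\psi$ samples produced by \textsc{RandomForest}, so the hypotheses hold with $b_i-a_i\le 1$.
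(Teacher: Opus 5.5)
Your proof is correct. It is the classical argument from the cited source~\cite{Ho63}: Markov's inequality on $e^{\lambda\sum_i X_i}$, factorization by independence, Hoeffding's lemma via the chord bound and $L''(h)=q(1-q)\le \frac14$, the choice $\lambda=4t/\sum_i(b_i-a_i)^2$, and a union bound over the two tails for the factor $2$. The paper gives no proof of its own and imports the inequality by citation. Your argument also tacitly assumes $\varepsilon>0$, which you need for $\lambda>0$; the paper's statement leaves this implicit too.
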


Then it is easy to derive the following theorem.
\begin{theorem}\label{th-Fast}
	For any error parameter $ \epsilon>0 $ and $ \delta\in(0,1) $, if the number of samples $\psi$ is chosen obeying $ \psi =\left \lceil \frac{1}{2\epsilon^2}\ln{\frac{2}{\delta} } \right \rceil  $, then for any $s \in V$, let $Z_1 = \frac{1}{\psi}\sum_{i=1}^{\psi}\Bar{\sigma}_s(F_i)$, $Z_2 = \frac{1}{\psi}\sum_{i=1}^{\psi} \Bar{\Tilde{\sigma}}_{s}(F_i)$, and $Z_3 = \frac{1}{\psi}\sum_{i=1}^{\psi} \Bar{\eta}_s(F_i)$, then we have $ \mathbb{P}\left\{ |Z_1- \sigma_s| >\epsilon \right\} \leq \delta $, $ \mathbb{P}\left\{ |Z_2- \tilde{\sigma}_s| >\epsilon \right\} \leq \delta $ and $ \mathbb{P}\left\{ |Z_3- \eta_s| >\epsilon \right\} \leq \delta $.
\end{theorem}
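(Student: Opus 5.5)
The plan is to apply Hoeffding's Inequality directly to each of the three estimators separately, with $n_z=\psi$ and $Z_i$ taken as the per-forest random variables. The first step is to note that the forests $F_1,\dots,F_\psi$ are generated by independent calls to \textsc{RandomForest}. Each call samples from $\calF$ with probability proportional to $\varepsilon(F)$ (the result of~\cite{AvLuGaAl18} quoted above). Hence, for a fixed $s\in V$, the sequences $\Bar{\sigma}_s(F_i)$, $\Bar{\Tilde{\sigma}}_s(F_i)$ and $\Bar{\eta}_s(F_i)$, $i=1,\dots,\psi$, are each i.i.d. Moreover, by the unbiasedness already established via Theorem~\ref{the:interpretation}, their means are $\sigma_s$, $\tilde{\sigma}_s$ and $\eta_s$, respectively. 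Consequently $\mathbb{E}[Z_1]=\sigma_s$, $\mathbb{E}[Z_2]=\tilde{\sigma}_s$ and $\mathbb{E}[Z_3]=\eta_s$.

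The second step is to bound the range of each summand. The indicator variables $\Bar{\Tilde{\sigma}}_s(F_i)=\one_{r(F_i,v)=s}$ and $\Bar{\eta}_s(F_i)=\one_{r(F_i,s)\in S}$ take values in $\{0,1\}\subseteq[0,1]$. For $\Bar{\sigma}_s(F_i)=|M(F_i,s)|/n$, the tree rooted at $s$ contains at most $n$ nodes, so $0\le|M(F_i,s)|\le n$ and the variable lies in $[0,1]$. In every case we may therefore take $a_i=0$, $b_i=1$, which gives $\sum_{i=1}^{\psi}(b_i-a_i)^2=\psi$.

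The third step is to plug into Hoeffding. This yields $\mathbb{P}[|Z_\ell-\mathbb{E}Z_\ell|\ge\epsilon]\le 2\exp(-2\psi\epsilon^2)$ for $\ell=1,2,3$. With $\psi=\lceil\frac{1}{2\epsilon^2}\ln\frac{2}{\delta}\rceil$ we have $2\psi\epsilon^2\ge\ln\frac{2}{\delta}$, so the right-hand side is at most $\delta$. Since the event $\{|Z_\ell-\mathbb{E}Z_\ell|>\epsilon\}$ is contained in the event with $\ge$, the three claimed bounds follow.

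The only point needing care is the boundedness technicality: Hoeffding as stated requires the variables to lie in $[a_i,b_i]$, and closed intervals suffice here, with $[0,1]$ covering the indicator cases. Beyond that, the argument rests entirely on the independence of the samples and the correctness of the weighted sampling from~\cite{AvLuGaAl18}, which are taken as given. No union bound is needed, since the statement is per node $s$ and per estimator.
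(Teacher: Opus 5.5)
Your proposal is correct and takes the same approach as the paper. The paper just says the result is "easy to derive" from Hoeffding's inequality, and you supply the intended details: i.i.d.\ per-forest estimators in $[0,1]$, unbiased by Theorem~\ref{the:interpretation} together with the weighted sampling of \textsc{RandomForest}, so that $2\exp(-2\psi\epsilon^2)\le\delta$ once $2\psi\epsilon^2\ge\ln(2/\delta)$.
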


\normalem
\begin{small}
\begin{algorithm}[H]	\caption{$\textsc{Fast}\left(\calG, b, S, \psi\right)$}
	\label{alg:fast}
	\Input{$\calG$ : a digraph, $b$: the rewiring budget, $S$: the target group, $\psi$: the number of spanning rooted forests to sample}
	\Output{$R$ : the solution set }
	Initialize: $ R \leftarrow \emptyset $\; 
 \For{$i = 1$ to $b$}{
    Construct the reweighted graph $\calG_r$ based on $\calG$ following the steps in Lemma~\ref{lem:reweight}\;
    $\sigma_v^\prime\leftarrow 0, \eta_v^\prime \leftarrow 0 \forall v \in V$, $w_{sum} = 0$\;
    \For{$t = 1$ to $\psi$}{
		$Root\leftarrow $ \textsc{RandomForest}($ \calG_r $)\;
        \For{$ j=1 $ to $ n $}{
			$ u \leftarrow  Root[j]$\;
			$\sigma_u^\prime \leftarrow \sigma_u^\prime + 1$\;
            \lIf{$u\in S$}{
            $\eta_j^\prime \leftarrow \eta_j^\prime + 1$}}
	}
	$\sigma^\prime \leftarrow \sigma^\prime/(n\psi)$; $\eta^\prime \leftarrow \eta^\prime/\psi$ \;
    $K\gets$ the top $d_{max}$ nodes of $\eta^\prime$\;
    $ (i^*, j^*, k^*) \leftarrow \argmax_{{(i,j,k), s.t. (i,j)\in E, (i,k)\notin E, k \in K}} \widehat{\Delta}^\prime((i,j,k), S) $\;
    Update the solution set $ R \leftarrow R \cup \{(i^*,j^*,k^*)\} $\;
    Update the graph $\calG = (V, E\setminus\{(i^*,j^*)\}\cup\{(i^*,k^*)\})$\;
	}
	\Return $ R $ \;
\end{algorithm}
\end{small}

\subsection{Fast Approximation Algorithm}

In this section, we present the full efficient sampling-based algorithm \textsc{Fast} to approximately solve the problem \textsc{Pair} in linear time. The main ingredient of the approximation algorithm \textsc{Fast} is the variation of Wilson's algorithm introduced in the preceding section. The details of algorithm \textsc{Fast} are described in Algorithm~\ref{alg:fast}. \textsc{Fast} takes a digraph $\calG$, a budget $b$, the target group $S$, and the number of samples $\psi$ as input. Then, \textsc{Fast} runs in $b$ rounds for iteratively select $b$ edge rewirings. In each round, it first constructs the reweighted graph $\calG_r$ following Lemma~\ref{lem:reweight}, followed by sampling $\psi$ spanning rooted forests by calling \textsc{RandomForest} for computing $\sigma^\prime$ and $\eta^\prime$. Subsequently, it goes through all $m$ edges and the set $K$ of $d_{max}$ rewiring targets that have top values in $\eta^\prime$ and selects an edge rewiring $(i^*,j^*,k^*)$ which maximizes the PR fairness gain 
\begin{align}
    \widehat{\Delta}^\prime ((i,j,k), S) = (1-\alpha)p_{ij}\sigma_i^\prime(\eta_k^\prime - \eta_j^\prime).
\end{align} 

The following theorem characterizes the performance of \textsc{Fast}.

\begin{theorem}\label{th-performance}	
	For any $b>0$, an error parameter $\eps$, and a failure probability $ \delta\in(0,1) $, let $\psi=\left \lceil \frac{1}{2\epsilon^2}\ln{\frac{2}{\delta} } \right \rceil$, \textsc{Fast} outputs a solution set $R$ in time $O(b(n\psi + md_{max}))$ by iteratively selecting $b$ edge rewirings such that in each iteration, the PR fairness gain is maximized.
\end{theorem}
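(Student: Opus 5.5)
The plan is to analyze one iteration of the outer loop of \textsc{Fast}, multiply its cost by $b$, and use Theorem~\ref{th-Fast} for the accuracy part. Correctness here means: in every round, the chosen rewiring maximizes the surrogate gain $\widehat{\Delta}'$ over all admissible triples with $k\in K$. By Theorem~\ref{th-Fast}, each estimated quantity $\sigma'_s$ and $\eta'_s$ is then within $\epsilon$ of its true value with probability at least $1-\delta$.

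\textbf{Running time of one round.} Constructing $\calG_r$ as in Lemma~\ref{lem:reweight} costs $O(n+m)$, since it only rescales each edge weight by its source's degree. Each call to \textsc{RandomForest} takes expected time $O(n)$ by Theorem~\ref{the:complexity}. The accumulation loop over the $n$ entries of $Root$ is also $O(n)$, with constant-time membership tests for $S$ using a precomputed indicator array. Over $\psi$ samples this gives $O(n\psi)$. Normalizing $\sigma'$ and $\eta'$ is $O(n)$, and selecting the top $d_{max}$ entries of $\eta'$ is $O(n)$ by linear-time selection.

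\textbf{The argmax step.} For each edge $(i,j)\in E$ and each $k\in K$, we check $(i,k)\notin E$ and evaluate $\widehat{\Delta}'=(1-\alpha)p_{ij}\sigma'_i(\eta'_k-\eta'_j)$ in $O(1)$. Adjacency can be tested in $O(1)$ with hashed out-neighbor sets, or by marking $\calN(i)$ once per node $i$, which adds only $O(m)$ overall. This step costs $O(md_{max})$. The argmax is nonempty because $|K|=d_{max}\ge \dd_i$, so some $k\in K$ is not an out-neighbor of $i$; the constraint $i\neq k$ costs at most one extra element, which I will handle by noting that either $i\notin K$ or $K$ can be taken of size $d_{max}+1$ without changing the bound. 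Updating the edge set and the solution is $O(1)$. One round therefore costs $O(n+m+n\psi+md_{max})=O(n\psi+md_{max})$, using $m\le md_{max}$ and $n\le n\psi$. Summing over $b$ rounds gives $O(b(n\psi+md_{max}))$.

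\textbf{Accuracy.} For the stated $\psi$, apply Theorem~\ref{th-Fast} within each round, noting that the sampled forests are independent and distributed proportionally to their weights as justified via~\cite{AvLuGaAl18}. Since the selected triple is an exact maximizer of $\widehat{\Delta}'$, the round's choice is greedy-optimal with respect to these $\epsilon$-accurate estimates; this is the sense of ``maximized'' in the statement.

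\textbf{Main obstacle.} The delicate point is that Theorem~\ref{the:complexity} gives an \emph{expected} $O(n)$ bound, so the total bound must also be read in expectation, by linearity over the $b\psi$ calls. The accuracy guarantee is per node and per round, so a uniform guarantee over all $s$ and all rounds would need a union bound, replacing $\delta$ by $\delta/(3nb)$. I would state the theorem's guarantee per quantity, as Theorem~\ref{th-Fast} does, and remark that the union-bounded version only adds a $\ln(nb)$ factor to $\psi$.
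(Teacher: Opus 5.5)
Your proposal is correct and matches the paper's implicit argument: the paper gives no separate proof, and the bound follows from Theorem~\ref{the:complexity} for the $\psi$ forest samples plus an $m\times d_{max}$ scan of candidate triples per round, repeated $b$ times, with Theorem~\ref{th-Fast} supplying per-quantity accuracy; your caveats that the time bound holds in expectation and that a uniform guarantee needs a union bound are sound refinements the paper omits. One small slip, which does not affect the bound: $|K|=d_{max}\ge |\calN(i)|$ does not force a non-neighbor of $i$ into $K$ (take $K=\calN(i)$ when $i$ has maximum out-degree), so you need $|K|\ge d_{max}+2$ to also exclude $i$ itself, and this is still $O(d_{max})$.
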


\begin{remark}\label{rem:fastv}
    The fast algorithm for optimizing PPR fairness is similar to \textsc{Fast} which could be obtained by replacing the $\widehat{\Delta}^\prime((i,j,k), S)$ in Line 14 of Algorithm~\ref{alg:fast} by the corresponding PPR gain $\widehat{\Delta}^\prime_v((i,j,k), S) = (1-\alpha)p_{ij}\tilde{\sigma}_i^\prime(\eta_k^\prime - \eta_j^\prime)$. We call this fast greedy algorithm for optimizing PPR fairness \textsc{Fastv}.
\end{remark}

\section{Experiments}\label{sec:exp}
This section experimentally evaluates the efficiency and accuracy of our proposed algorithms for optimizing PR and PPR fairness, compared to existing methods and baselines.

\subsection{Experimental Setup}
\textbf{Datasets.} To evaluate the efficiency and accuracy of our proposed algorithms for optimizing fairness, we conduct experiments on 6 real-world datasets of various sizes, as described in Table~\ref{tab:data} which are collected from $\mathrm{SNAP}$~\cite{LeSo16}. We refer to the disadvantaged group as $S$ and set the advantaged group to be $T=V\setminus S$. The split comes from the datasets. Detail descriptions for these datasets are as follows. 
\begin{itemize}[leftmargin=*]
    \item  Books: A network of books about US politics, where edges between books represented co-purchasing. The disadvantaged group is lesser-known books.
    \item Blogs:  A network of political blogs, disadvantaged group is blogs with lesser visibility. 
    \item DBLP-pub: The network is built from DBLP author collaborations in selected data mining and database conferences (2011–2020). The disadvantaged group includes authors whose first publication was in 2016 or later.
    \item DBLP-Gender: The same network as DBLP-pub but with the disadvantaged group being  underrepresented gender.
    \item DBLP-Aminer: An author collaboration network by Aminer academic search system using publication data from DBLP. The disadvantaged group here includes authors with fewer collaborations.
    \item LinkedIn: A professional network, disadvantaged group includes profiles with fewer connections. 
\end{itemize}
\textbf{Implementation Details.} All experiments are conducted on a Linux machine with an Intel Xeon(R) Gold 6240@2.60GHz 32-core processor and 128GB of RAM. We implemented our algorithms in Julia. For reproducibility, the source code is available at~\url{https://github.com/PRfairness/prfairness}. We report the running time (measured in wall-clock time) and the actual error of each algorithm on all datasets. We exclude a method if it fails to report the result within one day. For each dataset, we perform $b=50$ edge rewirings.  

\textbf{Comparison.} We compare our PR fairness algorithms, \textsc{Exact} and \textsc{Fast}, and PPR fairness algorithms, \textsc{Exactv} and \textsc{Fastv}, with the Random (RND) strategy which performs $b$ random edge rewirings.  We also compare with the existing algorithms\footnote{https://github.com/ksemer/fairPRrec} in~\cite{tsioutsiouliklis2022link} by adapting them to a sequence of edge deletion and edge addition. For ease of exposition, we refer to them as MFREC (for optimizing PR fairness) and MPREC (for optimizing PPR fairness). We also compare with an existing algorithm named RePBubLik\footnote{https://github.
com/CriMenghini/RePBubLik} (RBL) from~\cite{haddadan2021repbublik} that aims to reduce structural bias by adding links. We adapted RBL to edge rewiring setting by first obtaining a set of potential new edges returned by RBL and then checking the existing edges to find suitable edge rewirings according to our target functions. We use the open-source implementation of the existing algorithms.

For the optimization of PPR fairness, we select 10\% of the nodes randomly. Then, for each selected node, we perform $b=50$ edge rewirings as suggested by each of the considered algorithms in an iterative manner.  We report the Wasserstein distances per round between the $S$ PPR ratio of the nodes in $S$ (i.e., $\{\mpi_i(S)|i\in S\}$) and the $S$ PPR ratio of the nodes in $T$ (i.e., $\{\mpi_i(S)|i\in T\}$).

\begin{sidewaystable}
    \centering
    \setlength{\tabcolsep}{1.5pt} 
    \caption{The running time (seconds, $s$) and the relative error of \textsc{Exact}, MFREC, and \textsc{Fast} on real-world networks. We denote the number of nodes and edges by $n$ and $m$, respectively. Parameter $r(\texttt{dis})$ and $\mpi(\texttt{dis})$ represent the population proportion of the disadvantaged group and its initial PageRank allocation, respectively.}
    \label{tab:data}
    \begin{tabular}{lrrccrrrrrrr}
        \toprule
        Networks & $n$ & $m$ & $r(\texttt{dis})$ & $\mpi(\texttt{dis})$ &
        \textsc{Exact} & MFREC & $\psi=1000$ & $\psi=2000$ &
        MFREC & $\psi=1000$ & $\psi=2000$ \\
        \midrule
        Books & 92 & 748 & 0.533 & 0.526 & 2.54 & 16.5 & 1.11 & 1.82 & 2.31 & 1.12 & 0.54 \\
        Blogs & 1,222 & 16,717 & 0.485 & 0.332 & 312.2 & 254 & 3.41 & 5.19 & 3.27 & 0.91 & 0.64 \\
        DBLP-Pub & 16,501 & 133,226 & 0.080 & 0.061 & 75189 & 41258 & 20.12 & 35.33 & 2.41 & 1.87 & 0.87 \\
        DBLP-Gender & 16,501 & 66,613 & 0.257 & 0.249 & 67418 & 64127 & 23.26 & 41.72 & 2.71 & 0.84 & 0.61 \\
        DBLP-Aminer & 423,469 & 1,231,211 & 0.185 & 0.175 & -- & -- & 114.54 & 189.32 & -- & -- & -- \\
        LinkedIn & 3,209,448 & 6,508,228 & 0.626 & 0.625 & -- & -- & 264.19 & 471.72 & -- & -- & -- \\
        \bottomrule
    \end{tabular}
\end{sidewaystable}

\subsection{Effectiveness Evaluation}
\begin{figure}[t!]
    \centering
    \includegraphics[width=0.8\columnwidth]{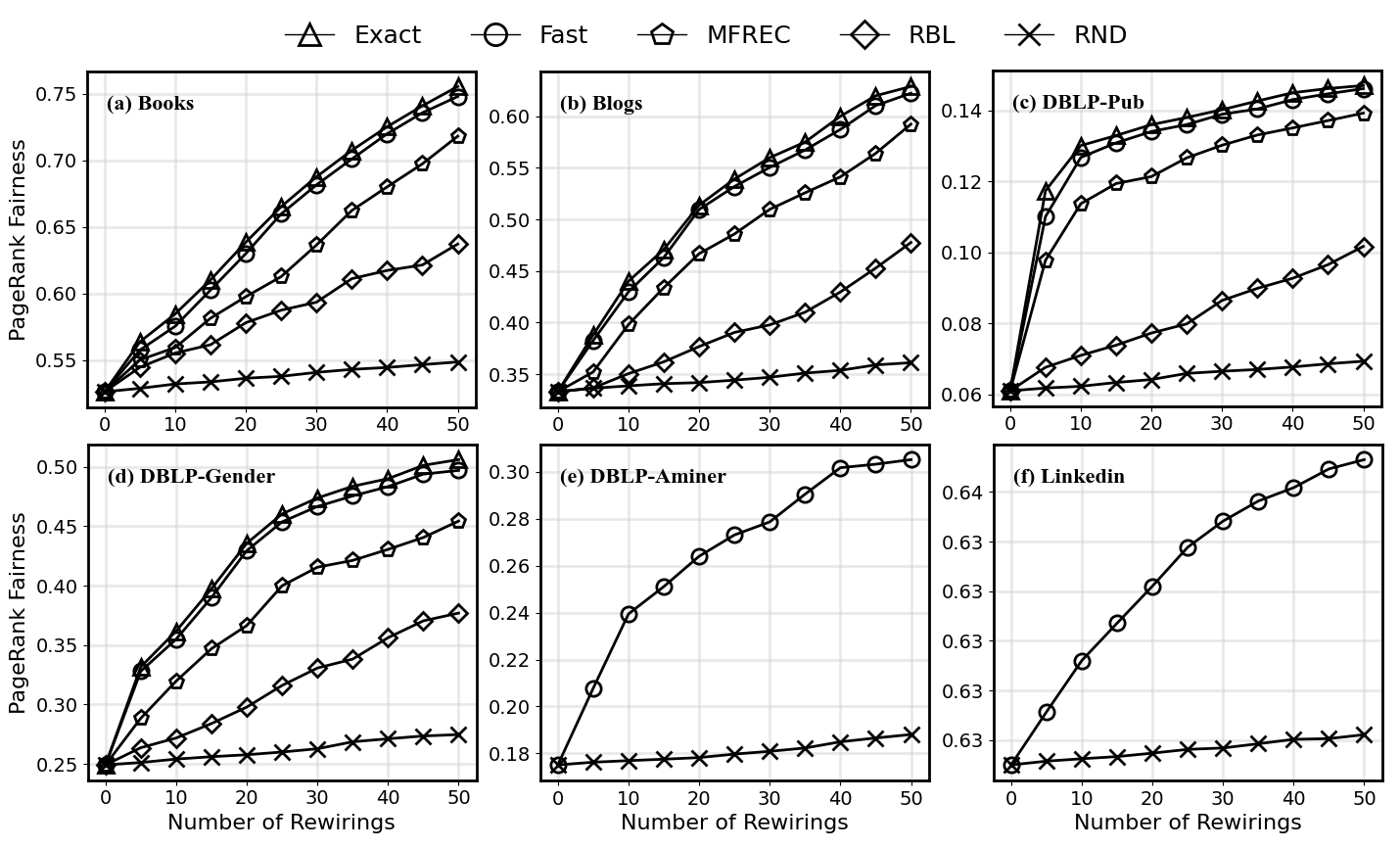}
    \caption{Comparison of different algorithms in increasing the PR fairness of the disadvantaged group.}
    \label{fig:prfair}
\end{figure}

\begin{figure}[t!]
    \centering
    \includegraphics[width=0.8\columnwidth]{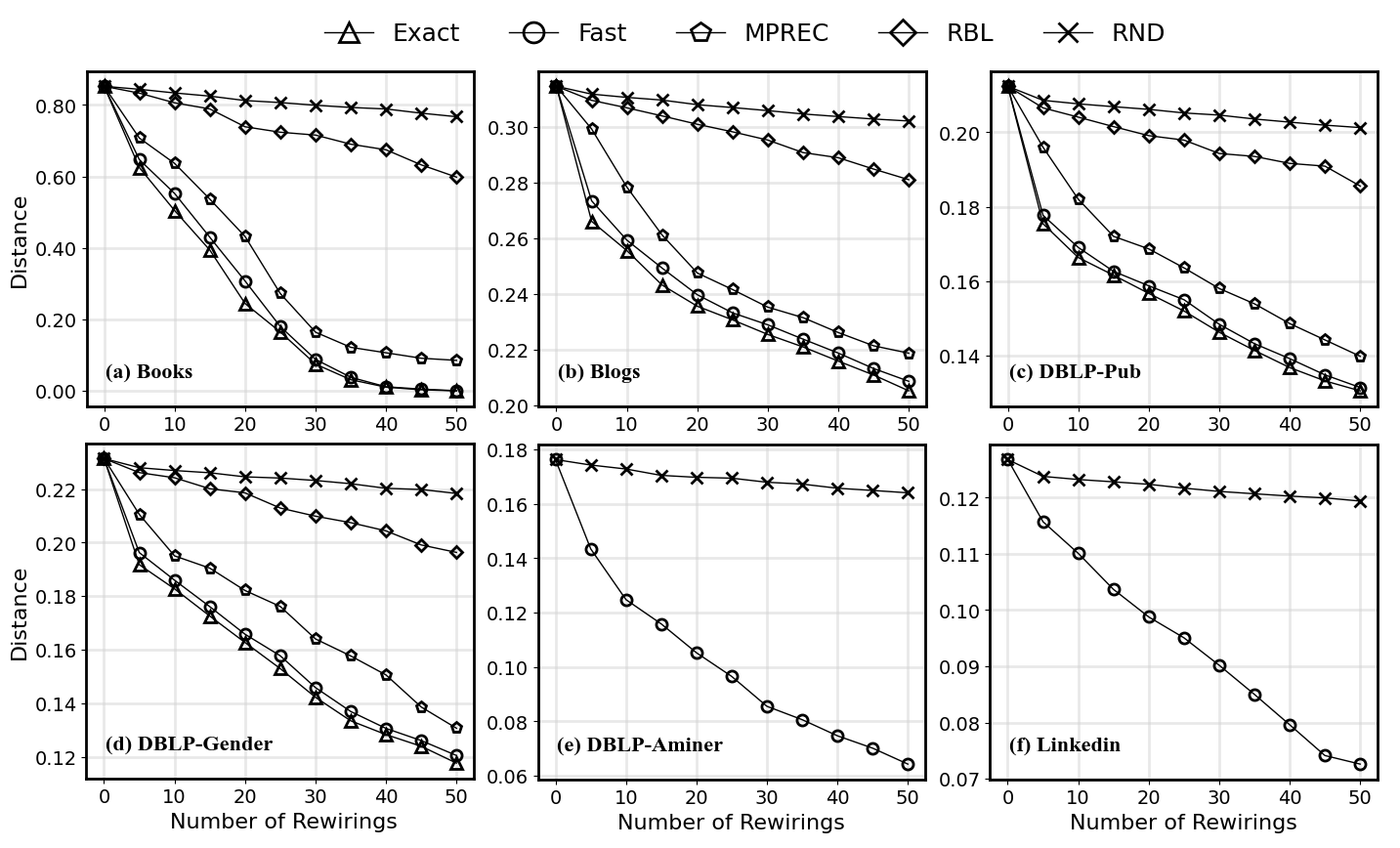}
    \caption{Wasserstein distances between the PPR ratio of the nodes in $S$ and the PPR ratio of the nodes in $T$.}
    \label{fig:distance}
\end{figure}

In this section, we evaluate the effectiveness of our algorithms against others. In Figure~\ref{fig:prfair}, we present the outcomes of our experiments on 6 real-world graphs. On each graph, the performance of every algorithm is evaluated along 50 rewiring operations. On all the graphs, it is clear that our algorithms outperform all other algorithms consistently, i.e., they achieve the best improvement in PageRank fairness. (The results are not reported for some algorithms on the larger networks since they didn't finish in a ``reasonable'' amount of time.)

In the last three columns of Table~\ref{tab:data}, we present the relative errors of the final PR allocation to the target group of \textsc{Fast} and \textsc{MFREC} compared to \textsc{Exact} for different networks and various numbers $\psi$ of samples. The results show that, the relative errors for \textsc{Fast} are consistently smaller with the smallest one being $0.54\%$. Additionally, as we increase $\psi$, the relative error decreases significantly. This again indicates that the results returned by \textsc{Fast} are very close to those corresponding to \textsc{Exact}. 

Now, let us look at the effectiveness of our algorithms in optimizing PPR fairness.
In Figure~\ref{fig:distance}, we plot the Wasserstein distance between the distribution of the percentage of $\mpi_i(S)$ for the nodes in $S$ and the nodes in $T=V\setminus S$. As shown in Figure~\ref{fig:distance}, our algorithms reduce the distance between these two distributions much more significantly.

\subsection{Efficiency Evaluation}

In the previous section, we observed that our algorithms consistently outperform other algorithms. Here, we focus on the run time analysis of these algorithms. To this end, we compare the proposed algorithms, namely \textsc{Exact} and \textsc{Fast}, with MFREC. (RBL and RND are excluded due to their unsatisfactory performance, see Figure~\ref{fig:prfair}.) We report the running time of each algorithm for rewiring 50 edges in different networks in Table~\ref{tab:data}. For algorithm \textsc{Fast}, we set the number of spanning rooted forests to be sampled as $\psi=1000$ and $\psi=2000$. The results show that \textsc{Fast} consistently runs faster than other algorithms. Specifically, for large-scale networks such as DBLP-Aminer and LinkedIn, all other algorithms do not terminate in one day, while \textsc{Fast} outputs the solution within 500 seconds even for the LinkedIn network with more than three million nodes. We note that the adapted existing algorithm MFREC can also handle the middle-scale networks DBLP-Pub and DBLP-Gender, but the running time is far more than \textsc{Fast}. Therefore, algorithm \textsc{Fast} is both effective and efficient, and scales to massive graphs.

\subsection{Impact of Reducing Candidates}
In the previous sections, we observed that, in addition to possessing theoretical guarantees, our algorithms outperform the existing strategies on different real-world graph data. To further validate the
robustness of our approach, we aim to understand if opting for 
$\widehat{\Delta}$ over $\Delta$, which facilitates the scalability of our algorithms, compromises the results’ precision. To shed light on this, we examined the interplay of $\widehat{\Delta}$ and $\Delta$
using two smaller-scale graphs, Books and Blogs, facilitating exact computations. As illustrated in Figure~\ref{fig:corre}, $\Delta$ and $\widehat{\Delta}$ are almost perfectly correlated, under Pearson correlation and as well as Spearman rank correlation. Thus, the outcome of these experiments support our reliance on $\widehat{\Delta}$, rather than $\Delta$ for our greedy rewiring strategies.

\begin{figure}[t!]
    \centering
    \includegraphics[width=0.8\columnwidth]{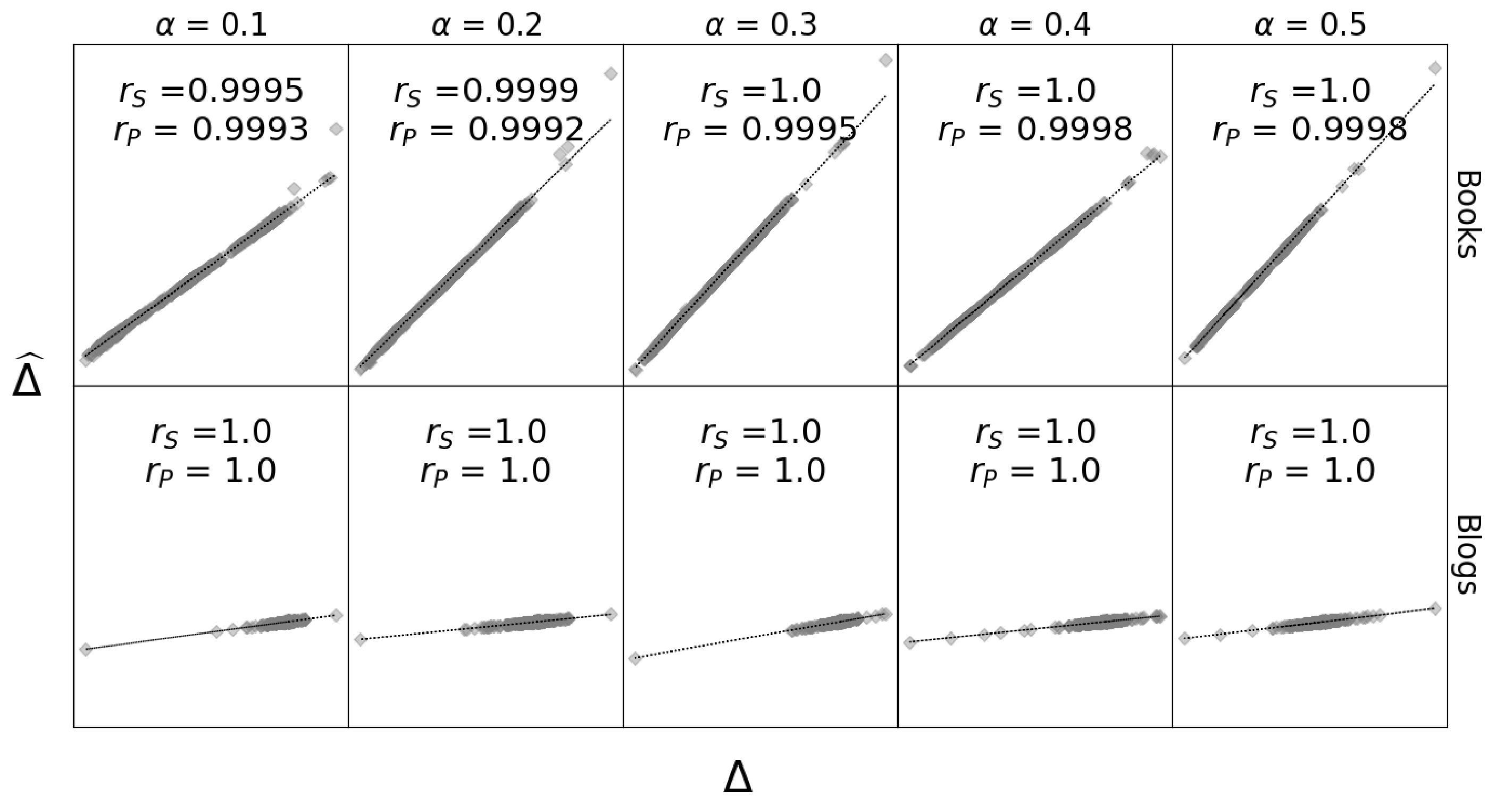}
    \caption{Correlation betweeen $\Delta$ and $\widehat{\Delta}$ for varying $\alpha$. $r_s$ and $r_p$ represent the Spearman and the Pearson coefficients, respectively.}
    \label{fig:corre}
\end{figure}

\section{Related Work}
\label{sec:related-work}

\textbf{Algorithmic Fairness.} Recently, there has been an increasing interest in algorithmic fairness in the context of social networks. Fairness is usually regarded as the lack of discrimination on the basis of some protective attribute, such as gender or race, and various mathematical formulations of fairness have been proposed~\cite{Kerchove2007MaximizingPV,EspinNoboa2021InequalityAI,PRbeyond,reducing2017,Kang2020InFoRMIF,recfairwww23,Kang2020InFoRMIF,www23fair}. Furthermore, several strategies have been introduced to maximize fairness~\cite{pitoura2021fairness,friedler2019comparative}, which can be categorized into \textit{pre-processing}, \textit{in-processing}, and \textit{post-processing}. These three improve fairness through adjusting the input, algorithm, and output, respectively. Our work falls under the umbrella of pre-processing strategies that modify the input network structure. Please refer to~\cite{friedler2019comparative,mehrabi2021survey,dong2023fairness,dai2022comprehensive} for more details on algorithmic fairness. 

\textbf{Network-based Fairness.} The evidences of inequality in graph-based characteristics of disadvantaged and advantaged groups have been consistently observed in real-world networks~\cite{friedler2019comparative,LiXiAh25}. Consequently, homophily and size imbalance have been investigated as parameters which can affect the ``social capital'' that different groups receive~\cite{dwork2012fairness, Feldman2014CertifyingAR}. It was shown that the combination of network characteristics such as homophily, preferential attachment and imbalances in group sizes leads to uneven degree distributions between groups~\cite{Amelkin2019FightingOC}. Furthermore, since the network data is commonly used as input of various social network algorithms, such as link recommenders, this inequality is intensified further~\cite{Karimi2018HomophilyIR}. 


\textbf{Graph Operations.} 
Different graph operations, such as edge deletion, addition, and rewiring, have been utilized to boost fairness~\cite{tsioutsiouliklis2022link, tsioutsiouliklis2021fairness,jalali2020information,bashardoust2022reducing}. Due to some of its desired properties, rewiring operation, in particular, has attracted significant attention in various fields such as opinion dynamics~\cite{chitra2020analyzing,bhalla2023local,santos2021link}, recommendation systems~\cite{Fabbri2022RewiringWR,Coupette2023ReducingET}, graph learning~\cite{ma2021graph,diffwire22} and others~\cite{valente2012network,d2019rewiring,kim2012network,han2019measuring,Chan2016OptimizingNR}. Specifically, its impact on overall network structure is minimal since it conserves the number of nodes, edges, and average degree, and influences some fundamental algebraic properties less substantially~\cite{ma2021graph}. In particular, for recommendation systems, the number of items recommended to an individual is typically fixed, necessitating only edge rewirings~\cite{Fabbri2022RewiringWR}.

\textbf{PageRank Fairness.} Our work studies a notion of fairness defined based on the very fundamental concept of PageRank. We aim to enhance fairness for a disadvantaged group by adjusting the graph structure. Our work can be seen as a continuation of the results from some recent works~\cite{tsioutsiouliklis2022link, tsioutsiouliklis2021fairness}, where modifying operation is adding/deleting edges.
However, our work distinguishes itself from theirs in three key aspects: 1) We introduce edge rewiring, which as discussed, possesses several desired preserving properties; 2) Their link recommendation restricts additions to a specified node, whereas we don't have this limitation; 3) Instead of simulating absorbing random walks for approximation, we harness the extended Wilson algorithm~\cite{Wilson1996GeneratingRS} for sampling, leading to more efficient algorithms. Notably, with small adaptation, our algorithms could also address the edge addition/deletion challenges as in~\cite{tsioutsiouliklis2022link}. Please refer to more details in the Appendix. Another relevant avenue of research aims at individual PageRank fairness~\cite{Kang2020InFoRMIF}, ensuring like individuals garner comparable PageRank values. Prior efforts also spotlighted solely on optimizing the PageRank of specific node subsets through edge operations~\cite{avrachenkov2006effect, csaji2014pagerank} or tracking Personalized PageRank in dynamic graphs~\cite{PRwww23}, but without a fairness lens.

\textbf{Algorithmic Techniques.}
The Wilson algorithm~\cite{Wilson1996GeneratingRS}, used in our work, has significant applications in multiple fields such as opinion dynamics~\cite{sun2023opinion}, the computation of resistance distances~\cite{Liao2023EfficientRD,Madry2015FastGO}, and graph signal processing~\cite{PiAmBaTr21}. However, we are the first to establish a connection between the Wilson algorithm and algorithmic fairness.
\section{Conclusion}\label{sec:conclusion}
We studied PageRank fairness and derived analytical formulas to quantify the effect of edge rewirings on PageRank and personalized PageRank fairness. For the problem of maximizing fairness towards a disadvantaged group, we proposed efficient liner-time edge rewiring algorithms which not only possess theoretical guarantees, but also consistently outperform the existing algorithms on different real-world datasets. In particular, our \textsc{Fast} algorithm can find accurate solutions for networks of million nodes in a few minutes.

There are several potential avenues for future research. In particular, we plan to explore other PageRank related fairness measures such as rank fairness~\cite{pitoura2021fairness} which aims to make the individuals in different groups have fair PageRank score rank in the network. Another interesting prospective research direction is to leverage novel approximation techniques to speed up our proposed algorithms further. This is essential considering the rapid growth of networks in the real world. Additionally, in the future works, we plan to explore alternative
scenarios, such as maximizing fairness when there are more than two groups and more than one disadvantaged group in the network.

\section*{Declaration of competing interest}
The authors declare that they have no known competing financial interests or personal relationships that could have appeared to influence the work reported in this paper.

\section*{Acknowledgements}

This work was supported  in part by the National Natural Science Foundation of China (Nos. 62372112 and 61872093).
\section*{CRediT authorship contribution statement}
\textbf{Changan Liu:} Writing -- review \& editing, Writing -- original draft, Visualization, Validation, Software, Resources, Methodology,
Investigation, Formal analysis, Data curation. \textbf{Haoxin Sun:} Writing -- review \& editing, Writing -- original draft, Visualization, Validation, Methodology,
Investigation, Data curation. \textbf{Ahad N. Zehmakan:} Writing -- review \& editing, Writing -- original draft, Visualization, Validation. \textbf{Zhongzhi Zhang:} Writing -- review \& editing, Writing -- original draft, Visualization,
Validation, Supervision, Project administration, Methodology, Funding acquisition, Conceptualization.

\bibliographystyle{elsarticle-num}
\normalem

\begin{thebibliography}{99}

\bibitem{d2021mitigating}
D'Angelo, G. and Abouei~Mehrizi, M. Mitigating negative influence diffusion is hard.
\newblock {\em Proceedings of the 30th ACM International Conference on Information \& Knowledge Management},  pp. 332--341.

\bibitem{becker2021influence}
Becker, R., d'Angelo, G., and Gilbert, H. Influence maximization with co-existing seeds.
\newblock {\em Proceedings of the 30th ACM International Conference on Information \& Knowledge Management},  pp. 100--109.

\bibitem{styczen2022targeted}
Styczen, M., Chen, B.-J., Teng, Y.-W., Pignolet, Y.-A., Chen, L., and Yang, D.-N. Targeted influence with community and gender-aware seeding.
\newblock {\em Proceedings of the 31st ACM International Conference on Information \& Knowledge Management},  pp. 4515--4519.

\bibitem{cong2023fairsample}
Cong, Z., Shi, B., Li, S., Yang, J., He, Q., and Pei, J. Fairsample: Training fair and accurate graph convolutional neural networks efficiently.
\newblock {\em IEEE Transactions on Knowledge and Data Engineering}, {\bf  36}, 1537--1551.

\bibitem{tkdeIMfair23}
Ali, J., Babaei, M., Chakraborty, A., Mirzasoleiman, B., Gummadi, K.~P., and Singla, A. On the fairness of time-critical influence maximization in social networks.
\newblock {\em IEEE Transactions on Knowledge and Data Engineering}, {\bf  35}, 2875--2886.

\bibitem{zhang2020fairness}
Zhang, T., Zhu, T., Li, J., Han, M., Zhou, W., and Philip, S.~Y. Fairness in semi-supervised learning: Unlabeled data help to reduce discrimination.
\newblock {\em IEEE Transactions on Knowledge and Data Engineering}, {\bf  34}, 1763--1774.

\bibitem{recfairwww23}
Chen, L., Wu, L., Zhang, K., Hong, R., Lian, D., Zhang, Z., Zhou, J., and Wang, M. Improving recommendation fairness via data augmentation.
\newblock {\em Proceedings of the Web Conference},  pp. 1012--1020. {ACM}.

\bibitem{fish2019gaps}
Fish, B., Bashardoust, A., Boyd, D., Friedler, S., Scheidegger, C., and Venkatasubramanian, S. Gaps in information access in social networks?
\newblock {\em The World Wide Web Conference},  pp. 480--490. ACM.

\bibitem{teng2021influencing}
Teng, Y.-W., Chen, H.-W., Yang, D.-N., Pignolet, Y.-A., Li, T.-W., and Chen, L. On influencing the influential: Disparity seeding.
\newblock {\em Proceedings of the 30th ACM International Conference on Information \& Knowledge Management},  pp. 1804--1813.

\bibitem{tsang2019group}
Tsang, A., Wilder, B., Rice, E., Tambe, M., and Zick, Y. Group-fairness in influence maximization.
\newblock {\em Proceedings of the 28th International Joint Conference on Artificial Intelligence},  pp. 5997--6005. AAAI Press.

\bibitem{yaseen2016influence}
Yaseen, Z.~K. and Marwan, Y. The influence of social media on recruitment and selection process in smes.
\newblock {\em Journal of Small Business and Entrepreneurship Development}, {\bf  4}, 21--27.

\bibitem{speicher2018potential}
Speicher, T., Ali, M., Venkatadri, G., Ribeiro, F.~N., Arvanitakis, G., Benevenuto, F., Gummadi, K.~P., Loiseau, P., and Mislove, A. Potential for discrimination in online targeted advertising.
\newblock {\em Conference on Fairness, Accountability and Transparency},  pp. 5--19. PMLR.

\bibitem{jalali2020information}
Jalali, Z.~S., Wang, W., Kim, M., Raghavan, H., and Soundarajan, S. On the information unfairness of social networks.
\newblock {\em Proceedings of the SIAM International Conference on Data Mining},  pp. 613--521. SIAM.

\bibitem{stoica2020seeding}
Stoica, A.-A., Han, J.~X., and Chaintreau, A. Seeding network influence in biased networks and the benefits of diversity.
\newblock {\em Proceedings of The Web Conference 2020},  pp. 2089--2098. ACM.

\bibitem{bashardoust2022reducing}
Bashardoust, A., Friedler, S.~A., Scheidegger, C.~E., Sullivan, B.~D., and Venkatasubramanian, S. Reducing access disparities in networks using edge augmentation.
\newblock {\em Proceedings of the ACM Conference on Fairness, Accountability, and Transparency},  pp. 1635--1651.

\bibitem{dwork2012fairness}
Dwork, C., Hardt, M., Pitassi, T., Reingold, O., and Zemel, R. Fairness through awareness.
\newblock {\em Proceedings of the 3rd Innovations in Theoretical Computer Science Conference},  pp. 214--226.

\bibitem{Feldman2014CertifyingAR}
Feldman, M., Friedler, S.~A., Moeller, J., Scheidegger, C.~E., and Venkatasubramanian, S. Certifying and removing disparate impact.
\newblock {\em Proceedings of the 21th ACM SIGKDD International Conference on Knowledge Discovery \& Data Mining},  pp. 259--268.

\bibitem{tsioutsiouliklis2021fairness}
Tsioutsiouliklis, S., Pitoura, E., Tsaparas, P., Kleftakis, I., and Mamoulis, N. Fairness-aware pagerank.
\newblock {\em Proceedings of the Web Conference},  pp. 3815--3826. ACM.

\bibitem{tsioutsiouliklis2022link}
Tsioutsiouliklis, S., Pitoura, E., Semertzidis, K., and Tsaparas, P. Link recommendations for \text{PageRank} fairness.
\newblock {\em Proceedings of the Web Conference},  pp. 3541--3551. ACM.

\bibitem{prfairwww24}
Stoica, A.-A., Litvak, N., and Chaintreau, A. Fairness rising from the ranks: Hits and pagerank on homophilic networks.
\newblock {\em Proceedings of the ACM on Web Conference 2024},  pp. 2594--2602.

\bibitem{Anatomy1998}
Brin, S. and Page, L. The anatomy of a large-scale hypertextual web search engine.
\newblock {\em Computer Networks}, {\bf  30}, 107--117.

\bibitem{PRbeyond}
Gleich, D.~F. Pagerank beyond the web.
\newblock {\em SIAM Review}, {\bf  57}, 321--363.

\bibitem{PRwww23}
Li, Z., Fu, D., and He, J. Everything evolves in personalized pagerank.
\newblock In Ding, Y., Tang, J., Sequeda, J.~F., Aroyo, L., Castillo, C., and Houben, G. (eds.), {\em Proceedings of the Web Conference},  pp. 3342--3352. {ACM}.

\bibitem{Chung2010PageRankAR}
Chung, F. R.~K. and Zhao, W. PageRank and Random Walks on Graphs. Springer,  Berlin, Heidelberg.

\bibitem{EspinNoboa2021InequalityAI}
Noboa, L.~E., Wagner, C., Strohmaier, M., and Karimi, F. Inequality and inequity in network-based ranking and recommendation algorithms.
\newblock {\em Scientific Reports}, {\bf  12}.

\bibitem{jeh2003scaling}
Jeh, G. and Widom, J. Scaling personalized web search.
\newblock {\em Proceedings of the 12th International Conference on World Wide Web},  pp. 271--279. ACM.

\bibitem{Yin2019ScalableGE}
Yin, Y. and Wei, Z. Scalable graph embeddings via sparse transpose proximities.
\newblock {\em Proceedings of the 25th ACM SIGKDD International Conference on Knowledge Discovery \& Data Mining},  pp. 1429--1437.

\bibitem{Chen2020ScalableGN}
Chen, M., Wei, Z., Ding, B., Li, Y., Yuan, Y., Du, X., and Wen, J.-R. Scalable graph neural networks via bidirectional propagation.
\newblock {\em Proceedings of the 34th International Conference on Neural Information Processing Systems}.

\bibitem{Wilson1996GeneratingRS}
Wilson, D.~B. Generating random spanning trees more quickly than the cover time.
\newblock {\em Proceedings of the Twenty-Eighth Annual ACM Symposium on Theory of Computing},  pp. 296--303.

\bibitem{fairdrop21}
Spinelli, I., Scardapane, S., Hussain, A., and Uncini, A. Fairdrop: Biased edge dropout for enhancing fairness in graph representation learning.
\newblock {\em IEEE Transactions on Artificial Intelligence}, {\bf  3}, 344--354.

\bibitem{li2021on}
Li, P., Wang, Y., Zhao, H., Hong, P., and Liu, H. On dyadic fairness: Exploring and mitigating bias in graph connections.
\newblock {\em International Conference on Learning Representations}.

\bibitem{LiZhAh24}
Liu, Changan and Zhou, Xiaotian and Zehmakan, Ahad N. and Zhang, Zhongzhi.{A Fast Algorithm for Moderating Critical Nodes via Edge Removal}. \newblock{\em IEEE Transactions on Knowledge and Data Engineering}, {\bf 36}, 2024, 1385--1398.

\bibitem{LiXiAh25}
Changan Liu and Xiaotian Zhou and Ahad N. Zehmakan and Zhongzhi Zhang. Promoting Fairness in Information Access within Social Networks. \newblock{\em arXiv}, {2512.14711}, 2025.

\bibitem{burst_2020}
Masrour, F., Wilson, T., Yan, H., Tan, P.-N., and Esfahanian, A. Bursting the filter bubble: Fairness-aware network link prediction.
\newblock {\em Proceedings of the AAAI Conference on Artificial Intelligence}, {\bf  34}, 841--848.

\bibitem{swift2022maximizing}
Swift, I.~P., Ebrahimi, S., Nova, A., and Asudeh, A. Maximizing fair content spread via edge suggestion in social networks.
\newblock {\em Proceedings of the VLDB Endowment}, {\bf  15}, 2692–2705.

\bibitem{chitra2020analyzing}
Chitra, U. and Musco, C. Analyzing the impact of filter bubbles on social network polarization.
\newblock {\em Proceedings of the 13th International Conference on Web Search and Data Mining},  pp. 115--123.

\bibitem{bhalla2023local}
Bhalla, N., Lechowicz, A., and Musco, C. Local edge dynamics and opinion polarization.
\newblock {\em Proceedings of the Sixteenth ACM International Conference on Web Search and Data Mining},  pp. 6--14.

\bibitem{santos2021link}
Santos, F.~P., Lelkes, Y., and Levin, S.~A. Link recommendation algorithms and dynamics of polarization in online social networks.
\newblock {\em Proceedings of the National Academy of Sciences}, {\bf  118}, e2102141118.

\bibitem{Fabbri2022RewiringWR}
Fabbri, F., Wang, Y., Bonchi, F., Castillo, C., and Mathioudakis, M. Rewiring what-to-watch-next recommendations to reduce radicalization pathways.
\newblock {\em Proceedings of the ACM Web Conference},  pp. 2719--2728.

\bibitem{Coupette2023ReducingET}
Coupette, C., Neumann, S., and Gionis, A. Reducing exposure to harmful content via graph rewiring.
\newblock {\em Proceedings of the 29th ACM SIGKDD International Conference on Knowledge Discovery \& Data Mining},  pp. 323--334.

\bibitem{ma2021graph}
Ma, Y., Wang, S., Derr, T., Wu, L., and Tang, J. Graph adversarial attack via rewiring.
\newblock {\em Proceedings of the 27th ACM SIGKDD International Conference on Knowledge Discovery \& Data Mining},  pp. 1161--1169.

\bibitem{diffwire22}
Arnaiz-Rodr{\'\i}guez, A., Begga, A., Escolano, F., and Oliver, N.~M. Diffwire: Inductive graph rewiring via the lov\'asz bound.
\newblock {\em The First Learning on Graphs Conference}.

\bibitem{valente2012network}
Valente, T.~W. Network interventions.
\newblock {\em Science}, {\bf  337}, 49--53.

\bibitem{d2019rewiring}
D'Alelio, D., Hay~Mele, B., Libralato, S., Ribera~d'Alcal{\`a}, M., and Jord{\'a}n, F. Rewiring and indirect effects underpin modularity reshuffling in a marine food web under environmental shifts.
\newblock {\em Ecology and Evolution}, {\bf  9}, 11631--11646.

\bibitem{kim2012network}
Kim, J., Kim, I., Han, S.~K., Bowie, J.~U., and Kim, S. Network rewiring is an important mechanism of gene essentiality change.
\newblock {\em Scientific Reports}, {\bf  2}, 900.

\bibitem{han2019measuring}
Han, Y. and Goetz, S.~J. Measuring network rewiring over time.
\newblock {\em Plos One}, {\bf  14}, e0220295.

\bibitem{Chan2016OptimizingNR}
Chan, H. and Akoglu, L. Optimizing network robustness by edge rewiring: a general framework.
\newblock {\em Data Mining and Knowledge Discovery}, {\bf  30}, 1395--1425.

\bibitem{rastegarpanah2019fighting}
Rastegarpanah, B., Gummadi, K.~P., and Crovella, M. Fighting fire with fire: Using antidote data to improve polarization and fairness of recommender systems.
\newblock {\em Proceedings of the 25th ACM International Conference on Web Search and Data Mining},  pp. 231--239.

\bibitem{wang2019enhancing}
Wang, Q., Yin, H., Wang, H., Nguyen, Q. V.~H., Huang, Z., and Cui, L. Enhancing collaborative filtering with generative augmentation.
\newblock {\em Proceedings of the 25th ACM SIGKDD International Conference on Knowledge Discovery and Data Mining},  pp. 548--556.

\bibitem{ChSh97}
Chebotarev, P.~Y. and Shamis, E.~V. The matrix-forest theorem and measuring relations in small social groups.
\newblock {\em Automation and Remote Control}, {\bf  58}, 1505--1514.

\bibitem{ChSh98}
Chebotarev, P.~Y. and Shamis, E.~V. On proximity measures for graph vertices.
\newblock {\em Automation and Remote Control}, {\bf  59}, 1443--1459.

\bibitem{Chebotarev2006SpanningFA}
Chebotarev, P.~Y. Spanning forests and the golden ratio.
\newblock {\em Discrete Applied Mathematics}, {\bf  156}, 813--821.

\bibitem{Golender1981GraphPM}
Golender, V.~E., Drboglav, V.~V., and Rosenblit, A.~B. Graph potentials method and its application for chemical information processing.
\newblock {\em Journal of Chemical Information and Computer Science}, {\bf  21}, 196--204.

\bibitem{Ch82}
Chaiken, S. A combinatorial proof of the all minors matrix tree theorem.
\newblock {\em SIAM Journal on Algebraic Discrete Methods}, {\bf  3}, 319--329.

\bibitem{Merris1998DoublySG}
Merris, R. Doubly stochastic graph matrices, ii.
\newblock {\em Linear \& Multilinear Algebra}, {\bf  45}, 275--285.

\bibitem{Fouss2007RandomWalkCO}
Fouss, F., Pirotte, A., Renders, J.-M., and Saerens, M. Random-walk computation of similarities between nodes of a graph with application to collaborative recommendation.
\newblock {\em IEEE Transactions on Knowledge and Data Engineering}, {\bf  19}, 355--369.

\bibitem{Jin2019ForestDC}
Jin, Y., Bao, Q., and Zhang, Z. Forest distance closeness centrality in disconnected graphs.
\newblock {\em IEEE International Conference on Data Mining},  pp. 339--348.

\bibitem{Senelle2013TheSD}
Senelle, M., Garc{\'i}a-D{\'i}ez, S., Mantrach, A., Shimbo, M., Saerens, M., and Fouss, F. The sum-over-forests density index: Identifying dense regions in a graph.
\newblock {\em IEEE Transactions on Pattern Analysis and Machine Intelligence}, {\bf  36}, 1268--1274.

\bibitem{sun2023opinion}
Sun, H. and Zhang, Z. Opinion optimization in directed social networks.
\newblock {\em Proceedings of the AAAI Conference on Artificial Intelligence},  pp. 4623--4632.

\bibitem{fu2022disco}
Fu, D., Ban, Y., Tong, H., Maciejewski, R., and He, J. Disco: comprehensive and explainable disinformation detection.
\newblock {\em Proceedings of the 31st ACM International Conference on Information \& Knowledge Management},  pp. 4848--4852.

\bibitem{fu2021sdg}
Fu, D. and He, J. Sdg: A simplified and dynamic graph neural network.
\newblock {\em Proceedings of the 44th International ACM SIGIR Conference on Research and Development in Information Retrieval},  pp. 2273--2277.

\bibitem{kamvar2003extrapolation}
Kamvar, S.~D., Haveliwala, T.~H., Manning, C.~D., and Golub, G.~H. Extrapolation methods for accelerating pagerank computations.
\newblock {\em Proceedings of the 12th International Conference on World Wide Web},  pp. 261--270.

\bibitem{Tong2006FastRW}
Tong, H., Faloutsos, C., and Pan, J.-Y. Fast random walk with restart and its applications.
\newblock {\em Sixth International Conference on Data Mining},  pp. 613--622.

\bibitem{yao2012anomaly}
Yao, Z., Mark, P., and Rabbat, M. Anomaly detection using proximity graph and pagerank algorithm.
\newblock {\em IEEE Transactions on Information Forensics and Security}, {\bf  7}, 1288--1300.

\bibitem{yoon2019fast}
Yoon, M., Hooi, B., Shin, K., and Faloutsos, C. Fast and accurate anomaly detection in dynamic graphs with a two-pronged approach.
\newblock {\em Proceedings of the 25th ACM SIGKDD International Conference on Knowledge Discovery \& Data Mining},  pp. 647--657.

\bibitem{Me73}
Meyer, C.~D., Jr Generalized inversion of modified matrices.
\newblock {\em SIAM Journal on Applied Mathematics}, {\bf  24}, 315--323.

\bibitem{Wi96}
Wilson, D.~B. Generating random spanning trees more quickly than the cover time.
\newblock {\em Proceedings of the Twenty-Eighth Annual ACM Symposium on Theory of Computing}  296–303.

\bibitem{WiPr96}
Wilson, D.~B. and Propp, J.~G. How to get an exact sample from a generic markov chain and sample a random spanning tree from a directed graph, both within the cover time.
\newblock {\em Proceedings of the Seventh Annual ACM-SIAM Symposium on Discrete Algorithms},  pp. 448--457.

\bibitem{La80}
Lawler and Gregory, F. A self-avoiding random walk.
\newblock {\em Duke Mathematical Journal}, {\bf  47}, 655--693.

\bibitem{AvLuGaAl18}
Avena, L. and Gaudilli{\`e}re, A. Two applications of random spanning forests.
\newblock {\em Journal of Theoretical Probability}, {\bf  31}, 1975--2004.

\bibitem{PiAmBaTr21}
Pilavc{\i}, Y.~Y., Amblard, P.-O., Barthelme, S., and Tremblay, N. Graph {T}ikhonov regularization and interpolation via random spanning forests.
\newblock {\em IEEE Transactions on Signal and Information Processing over Networks}, {\bf  7}, 359--374.

\bibitem{Ma00}
Marchal, P. Loop-erased random walks, spanning trees and hamiltonian cycles.
\newblock {\em Electronic Communications in Probability}, {\bf  5}, 39--50.

\bibitem{Ho63}
Hoeffding and Wassily Probability inequalities for sums of bounded random variables.
\newblock {\em Journal of the American Statistical Association}, {\bf  58}, 13--30.

\bibitem{LeSo16}
Leskovec, J. and Sosi{\v{c}}, R. {SNAP}: A general-purpose network analysis and graph-mining library.
\newblock {\em ACM Transactions on Intelligent Systems and Technolog}, {\bf  8}, 1.

\bibitem{haddadan2021repbublik}
Haddadan, S., Menghini, C., Riondato, M., and Upfal, E. Repbublik: Reducing polarized bubble radius with link insertions.
\newblock {\em Proceedings of the 14th ACM International Conference on Web Search and Data Mining},  pp. 139--147.

\bibitem{Kerchove2007MaximizingPV}
{de Kerchove}, C., Ninove, L., and {van Dooren}, P. Maximizing pagerank via outlinks.
\newblock {\em Linear Algebra and its Applications}, {\bf  429}, 1254--1276.

\bibitem{reducing2017}
Garimella, K., De~Francisci~Morales, G., Gionis, A., and Mathioudakis, M. Reducing controversy by connecting opposing views.
\newblock {\em Proceedings of the Tenth ACM International Conference on Web Search and Data Mining},  pp. 81--90. ACM.

\bibitem{Kang2020InFoRMIF}
Kang, J., He, J., Maciejewski, R., and Tong, H. Inform: Individual fairness on graph mining.
\newblock {\em Proceedings of the 26th ACM SIGKDD International Conference on Knowledge Discovery \& Data Mining}.

\bibitem{www23fair}
Fu, D., Zhou, D., Maciejewski, R., Croitoru, A., Boyd, M., and He, J. Fairness-aware clique-preserving spectral clustering of temporal graphs.
\newblock {\em Proceedings of the Web Conference},  pp. 3755--3765. {ACM}.

\bibitem{pitoura2021fairness}
Pitoura, E., Stefanidis, K., and Koutrika, G. Fairness in rankings and recommendations: An overview.
\newblock {\em The VLDB Journal}, {\bf  31}, 431–458.

\bibitem{friedler2019comparative}
Friedler, S.~A., Scheidegger, C., Venkatasubramanian, S., Choudhary, S., Hamilton, E.~P., and Roth, D. A comparative study of fairness-enhancing interventions in machine learning.
\newblock {\em Proceedings of the Conference on Fairness, Accountability, and Transparency},  pp. 329--338. ACM.

\bibitem{mehrabi2021survey}
Mehrabi, N., Morstatter, F., Saxena, N., Lerman, K., and Galstyan, A. A survey on bias and fairness in machine learning.
\newblock {\em ACM Computing Surveys}, {\bf  54}, 1--35.

\bibitem{dong2023fairness}
Dong, Y., Ma, J., Wang, S., Chen, C., and Li, J. Fairness in graph mining: A survey.
\newblock {\em IEEE Transactions on Knowledge and Data Engineering}, {\bf  35}, 10583--10602.

\bibitem{dai2022comprehensive}
Dai, E., Zhao, T., Zhu, H., Xu, J., Guo, Z., Liu, H., Tang, J., and Wang, S. A comprehensive survey on trustworthy graph neural networks: Privacy, robustness, fairness, and explainability.
\newblock {\em arXiv preprint arXiv:2204.08570}.

\bibitem{Amelkin2019FightingOC}
Amelkin, V. and Singh, A.~K. Fighting opinion control in social networks via link recommendation.
\newblock {\em Proceedings of the 25th ACM SIGKDD International Conference on Knowledge Discovery and Data Mining},  pp. 677--685.

\bibitem{Karimi2018HomophilyIR}
Karimi, F., G{\'e}nois, M., Wagner, C., Singer, P., and Strohmaier, M. Homophily influences ranking of minorities in social networks.
\newblock {\em Scientific Reports}, {\bf  8}.

\bibitem{avrachenkov2006effect}
Avrachenkov, K. and Litvak, N. The effect of new links on google pagerank.
\newblock {\em Stochastic Models}, {\bf  22}, 319--331.

\bibitem{csaji2014pagerank}
Cs{\'a}ji, B.~C., Jungers, R.~M., and Blondel, V.~D. Pagerank optimization by edge selection.
\newblock {\em Discrete Applied Mathematics}, {\bf  169}, 73--87.

\bibitem{Liao2023EfficientRD}
Liao, M., Li, R., Dai, Q., Chen, H., Qin, H., and Wang, G. Efficient resistance distance computation: The power of landmark-based approaches.
\newblock {\em Proceedings of the ACM on Management of Data},  pp. 1--27.

\bibitem{Madry2015FastGO}
Madry, A., Straszak, D., and Tarnawski, J. Fast generation of random spanning trees and the effective resistance metric.
\newblock {\em ACM-SIAM Symposium on Discrete Algorithms},  pp. 2019--2036.

\end{thebibliography}

\end{document}